\documentclass[11pt,reqno]{amsart}
\usepackage[foot]{amsaddr}
\usepackage{graphicx}
\usepackage{amsthm,amssymb,amsmath,calc,eucal,ifthen}
\usepackage{amscd}
\usepackage[all,arc,curve,color,frame,matrix,arrow]{xy}
\usepackage{xcolor}
\usepackage{pgf,tikz,pgfplots}
\pgfplotsset{compat=1.15}
\usepackage{mathrsfs}
\usetikzlibrary{arrows}
\pagestyle{empty}

\numberwithin{equation}{section}

\newtheorem{theorem}{Theorem}[section]
\newtheorem{lemma}[theorem]{Lemma}
\newtheorem{proposition}[theorem]{Proposition}

\theoremstyle{remark}
\newtheorem{remark}[theorem]{Remark}
\newtheorem{example}[theorem]{Example}

\newtheoremstyle{rmdefinition}{}{}{\upshape}{}{\bfseries}{.}{ }{}

\theoremstyle{rmdefinition}
\newtheorem{definition}[theorem]{Definition}

\newcommand{\R}{{\mathbb R}}

\newcommand{\be}[1]{\begin{equation}\label{#1}}
\newcommand{\ee}{\end{equation}}
\newcommand{\beqa}{\begin{eqnarray}}
\newcommand{\eeqa}{\end{eqnarray}}

\newcounter{tmpc}
\newlength{\tmplenght}
\setlength{\tmplenght}{31pt}
\newlength{\tmplenghta}
\newlength{\tmplenghtb}
\newlength{\tmplenghtc}

\DeclareMathOperator{\Id}{Id}


\begin{document}

\title[Projective limits]{Projective limits in Euclidean quantum field theory, II: Abelian gauge theory}

\author{Svetoslav Zahariev}
\address{MEC Department, LaGuardia Community College of The City University of New York, 31-10 Thomson Ave., Long Island City, NY 11101, U.S.A.}
\email{szahariev@lagcc.cuny.edu}

\maketitle
\begin{abstract}
We present two constructions of continuum and thermodynamic limits of Abelian polyhedral gauge theories in arbitrary spacetime dimension. The first construction relies on the existence of projective systems of heat kernel measures, while the second involves an infinite dimensional heat kernel measure defined using projective limit of Hilbert spaces. As a special case, we obtain a model of Abelian gauge theory on the infinite cubical lattice, which, in contrast to the standard one, is massless for arbitrary values of the coupling parameter.
\end{abstract}
\section{Introduction}

In this article, we continue our study of continuum and infinite volume limits in Euclidean quantum field theory via projective systems of configuration spaces and probability measures initiated in \cite{Z1} in the context of Abelian gauge theory. 

The starting point of our constructions is the observation that the well-known Villain action in lattice (and more generally, polyhedral) gauge theory may be interpreted as the composition of the coboundary operator acting on group-valued 1-cochains with the heat kernel on the Lie group of 2-cochains. In the case of Abelian gauge theory, the image of the latter coboundary operator is a closed Lie subgroup of the group of all 2-cochains and we propose to modify the Villain action by composing the coboundary with the heat kernel on that image subgroup. This easily leads, in the case of a contractible polyhedral complex, to an interpretation of the measure associated to this modified action as a pushforward of the heat kernel measure on the image subgroup via the inverse of the map on gauge equivalence classes on 1-cochains induced by the coboundary operator.

The pushforward measure interpretation leads, in turn, to our first main result, Theorem \ref{mainconstrthe}, which may be stated as follows. Consider a countable sequence of finite contractible polyhedral complexes of dimension $d$ associated to either a continuum or an infinite limit procedure. Then there is an inductive construction of ``renormalized'' inner products on the Lie algebra valued 2-cochains on those complexes such that the associated modified Villain action measures form a projective system, hence a projective limit measure exists on the corresponding projective limit space of gauge equivalence classes of 1-cochains. When $d=2$ the natural Euclidean inner products suffice and our result reduces to that obtained in \cite{VM0}-\cite{VM2}. It should also be pointed out that results on the convergence of Abelian lattice gauge theory to its continuum limit within the standard constructive QFT framework have been obtained for $d=3$ and $d=4$, cf. \cite{Gro} and \cite{Dr}.

When the above construction is applied to infinite volume limits in dimensions larger than 2, one obtains a measure on the infinite lattice $\mathbb{Z}^d$ which unfortunately is not invariant under lattice translations. To remedy this deficiency, we propose an alternative construction which yields a modified Villain action measure directly on the projective limit configuration space. This is achieved by obtaining first an infinite dimensional heat kernel measure on the projective limit space of exact 1-cochains and then taking its pushforward under the inverse of the limit coboudary operator. According to our second main result, Theorem \ref{masslessth}, the infinite lattice measure so obtained is lattice translation invariant and moreover corresponds to a massless field for arbitrary values of the coupling parameter and for any $d>2$. We recall that, in contrast, the standard thermodynamic limit of 4-dimensional Abelian lattice gauge theory exhibits a phase transition and is massless only for small couplings, as proved in \cite{OS} and \cite{FS}.

This article is organized as follows. In Section \ref{prelimsec} we present background material mostly concerning Haar measures, heat kernels and Fourier analysis on compact Abelian Lie groups. In Section \ref{vactinpoly} we introduce the modified Villain action described above and discuss its basic properties. Section \ref{projmlimmeasec} is dedicated to the construction of projective systems of measures via ``renormalized'' inner products. Finally, in Section \ref{limmeassect} we introduce a general construction of limit heat kernel measures on projective limits of compact Abelian Lie groups and then establish the masslessness property of the associated infinite volume Abelian lattice gauge theory.

\section{Preliminaries}\label{prelimsec}
\subsection{Fourier transform on compact Abelian groups} In this section we briefly review some basic facts from harmonic analysis on Abelian groups. For more details the reader is referred to \cite{Ru} or \cite{Fo}.
Let \( G \) be an Abelian compact topological group. Recall that its dual 
\emph{dual group} \( \widehat{G} \) may be defined as the set of all continuous group homomorphisms (characters) $\chi$ from $G$ to the circle group $U(1)$,
equipped with pointwise multiplication:
\[
(\chi_1 \chi_2)(x) = \chi_1(x) \chi_2(x).
\]
If \( F: G_1 \to G_2 \) is a continuous homomorphism of compact Abelian groups, the \emph{dual map}
$
\widehat{F} : \widehat{G}_2 \to \widehat{G}_1
$
is given by
\[
\widehat{F}(\chi) = \chi \circ F, \quad \chi \in \widehat{G}_2.
\]
We write $\nu$ for the normalized Haar measure on $G$ and, given $f \in  L^1(G,\nu)$ define its \emph{Fourier transform} via
\[
\widehat{f} : \widehat{G} \to \mathbb{C}, \quad \widehat{f}(\chi) = \int_G f(x) \overline{\chi(x)} \, d\nu(x).
\]
More generally, the Fourier transform of a finite Borel measure \( \mu \) on \( G \) is defined by
\[
\widehat{\mu}(\chi) = \int_G \overline{\chi(x)} \, d\mu(x), \quad \chi \in \widehat{G}.
\]
Now let \( G_1, G_2 \) be compact Abelian groups and let \( \mu_i\) be a finite Borel measure on $G_i$. Let \( F: G_1 \to G_2 \) be a continuous homomorphism and denote the pushforward/image measure of \( \mu_1\) via $F$ by $F(\mu_1)$.
\begin{lemma}\label{pushfofour}
One has \( F(\mu_1) = \mu_2 \) if and only if 
\(\widehat{\mu_2} \,=\, \widehat{\mu_1} \circ \widehat{F}\).
\end{lemma}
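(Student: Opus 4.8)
The plan is to prove the "only if" direction by a direct change of variables, and the "if" direction by combining that computation with the uniqueness theorem for Fourier transforms of finite Borel measures on a compact Abelian group.

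First I would assume $F(\mu_1) = \mu_2$. For any $\chi \in \widehat{G}_2$ the map $\overline{\chi}$ is bounded and Borel measurable, so the change of variables formula for pushforward measures applies and gives
\[
\widehat{\mu_2}(\chi) \;=\; \int_{G_2} \overline{\chi(y)}\, d\big(F(\mu_1)\big)(y) \;=\; \int_{G_1} \overline{\chi(F(x))}\, d\mu_1(x).
\]
Since $\chi \circ F = \widehat{F}(\chi)$ is a character of $G_1$, the right-hand side is by definition $\widehat{\mu_1}\big(\widehat{F}(\chi)\big)$, whence $\widehat{\mu_2}(\chi) = (\widehat{\mu_1}\circ\widehat{F})(\chi)$. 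Note that this computation used nothing about $\mu_2$; it establishes, for every finite Borel measure $\mu_1$ on $G_1$, the identity $\widehat{F(\mu_1)} = \widehat{\mu_1}\circ\widehat{F}$, and this identity will also drive the converse.

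For the "if" direction, suppose $\widehat{\mu_2} = \widehat{\mu_1}\circ\widehat{F}$. By the identity just recorded, $\widehat{\mu_1}\circ\widehat{F} = \widehat{F(\mu_1)}$, so $\widehat{\mu_2} = \widehat{F(\mu_1)}$, i.e. the two finite Borel measures $\mu_2$ and $F(\mu_1)$ on $G_2$ have the same Fourier transform. It then suffices to invoke the uniqueness theorem: a finite (complex) Borel measure on a compact Abelian group is determined by its Fourier transform. I would include the short argument for completeness. The complex measure $\mu_2 - F(\mu_1)$ annihilates every $\chi \in \widehat{G}_2$, and since $\overline{\chi} = \chi^{-1} \in \widehat{G}_2$ it annihilates every $\overline{\chi}$ as well; hence it annihilates the linear span of $\widehat{G}_2$ in $C(G_2)$. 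That span is a subalgebra of $C(G_2)$ which contains the constants, is closed under complex conjugation, and separates points of $G_2$ (characters separate points on a compact Abelian group), so by the Stone–Weierstrass theorem it is uniformly dense in $C(G_2)$. Therefore $\mu_2 - F(\mu_1)$ annihilates all of $C(G_2)$, and by the Riesz representation theorem $\mu_2 = F(\mu_1)$.

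I do not expect a genuine obstacle here: the only nonelementary input is the uniqueness of Fourier transforms of measures, which is standard (see \cite{Ru} or \cite{Fo}), and the rest is a routine change of variables. The two points that warrant a little care are applying the pushforward formula to $\overline{\chi}$ rather than to an integrable function (legitimate, since $\overline{\chi}$ is bounded and $\mu_1$ is finite) and observing that the density-via-Stone–Weierstrass argument requires the span of characters, not the characters themselves, to be the relevant dense subspace.
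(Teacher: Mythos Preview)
Your proof is correct and follows essentially the same approach as the paper: both compute $\widehat{F(\mu_1)}(\chi)=\widehat{\mu_1}(\widehat{F}(\chi))$ by change of variables to get the ``only if'' direction, and both deduce the converse from the uniqueness theorem for Fourier transforms of finite Borel measures (the paper simply cites \cite[(4.33)]{Fo}, whereas you spell out the Stone--Weierstrass argument).
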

\begin{proof}
Applying the change of variables formula for pushforward measures, we find
$$
\int_{G_2} \overline{\chi(y)}\,dF(\mu_1)(y)
\;=\;
\int_{G_1} \overline{\chi(F(x))}\,d\mu_1(x)
\;=\;
\widehat{\mu_1}(\chi\circ F)
$$
for all $\chi \in \widehat{G}_2$. Since the leftmost integral above is equal to \(\widehat{F(\mu_1)}(\chi)\), we conclude that \(F(\mu_1)=\mu_2\) implies 
\(\widehat{\mu_2} \,=\, \widehat{\mu_1} \circ \widehat{F}\).
The converse implication follows from the fact that a finite Borel measure is determined by its Fourier transform (cf. \cite[(4.33)]{Fo}).
\end{proof}

\subsection{Heat kernels and heat kernel measures}\label{hkermeassec}
Recall that the heat kernel $p_{\beta}(x,y)$ on a compact connected Riemannian manifold $M$ is defined as the kernel of the operator $e^{-\beta \Delta_M}$, where $\beta>0$ and $\Delta_M$ stands for the (positive) Laplacian on $M$. The function $p_{\beta}(x,y)$ is smooth, symmetric and strictly positive for all $\beta>0$. For the construction of the heat kernel and its properties, the reader is referred to \cite[Chapters 10 and 11]{Li} and \cite[Chapter 3]{Ro}.

Let $G$ be a compact connected Lie group equipped with a left-invariant Riemannian metric. We shall call the {\em heat kernel} on $G$ the function $H_{\beta}$ on $G$ given by evaluating one of the arguments of $p_{\beta}(x,y)$ at the identity of $G$. More information about heat kernels on Lie groups may be found in \cite{VSC} and \cite{Ma}.  Further, we shall call the Borel probability measure
$$ \mu_{\beta}=z_{\beta} \, H_{\beta} \, \nu, $$
where $z_{\beta}$ is a normalization constant, the {\em heat kernel measure} on $G$.

Now consider a compact connected Abelian Lie group $G$ with  Lie algebra $\mathfrak g$ of dimension $n$. The exponential map $\exp:\ \mathfrak g \longrightarrow G
$
is a surjective Lie group homomorphism whose kernel 
$
\Lambda := \ker(\exp) \subset \mathfrak g
$
is a lattice of rank $n$, hence $G$ is isomorphic to the torus $\mathfrak g / \Lambda$. Using this, one obtains a canonical isomorphism 
$\widehat{G} \;\cong\; \Lambda^* \subset \mathfrak g^*$, where $\Lambda^*$ is the dual lattice. Alternatively, one can regard the embedding $\widehat{G} \hookrightarrow \mathfrak g^*$ as given by taking the derivative of a character at identity.

We fix an inner product on $\mathfrak g$ and denote by $\| \cdot \|_{ \mathfrak g^*}$ the induced norm on  $\mathfrak g^*$. Further, we write $\Delta$ and $H_{\beta}$ respectively for the (positive) Laplacian and the heat kernel on $G$ with respect to the induced invariant metric on $G$. The following fact is well-known.
\begin{lemma}\label{hetkerexpa}
The heat kernel $H_{\beta}$ has Fourier transform $$\widehat{H_{\beta}}(\xi) = e^{-\,4\pi^2 \beta\,\|\xi\|_{\mathfrak g^*}^{2}},\quad \xi\in \Lambda^*,$$ hence
\[
H_{\beta}(x) = \sum_{\xi\in\widehat{G}} e^{-\,4\pi^2 \beta\,\|\xi\|_{\mathfrak g^*}^{2}}\,\chi_\xi(x),
\]
where $ \chi_\xi$ is the image of $\xi \in \Lambda^*$ under the isomorphism  $\Lambda^*\cong \widehat{G}$, with absolute and uniform convergence for each fixed $\beta>0$.
\end{lemma}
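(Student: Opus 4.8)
The plan is to reduce the statement to the classical fact that the heat kernel on a torus is the theta-like series built from eigenvalues of the Laplacian, with the eigenfunctions being the characters. First I would diagonalize $\Delta$ on $L^2(G,\nu)$. Since $G \cong \mathfrak g/\Lambda$, the characters $\chi_\xi$, $\xi \in \Lambda^*$, form an orthonormal basis of $L^2(G,\nu)$, and each $\chi_\xi$ is a smooth eigenfunction of the invariant Laplacian. To compute the eigenvalue, I would lift $\chi_\xi$ to $\mathfrak g$ via $\exp$: under the identification $\widehat G \cong \Lambda^*$ normalized so that $\xi \in \mathfrak g^*$ corresponds to the character $v + \Lambda \mapsto e^{2\pi i \,\xi(v)}$ on $\mathfrak g/\Lambda$, a direct computation with the flat invariant metric gives $\Delta \chi_\xi = 4\pi^2 \|\xi\|_{\mathfrak g^*}^2 \, \chi_\xi$. (The factor $4\pi^2$ is exactly the artifact of the $2\pi i$ in the character normalization; this is the one spot where one must be careful about conventions, but it is a routine check.)

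Next I would identify the heat kernel itself. By definition $H_\beta$ is obtained from the integral kernel $p_\beta(x,y)$ of $e^{-\beta\Delta}$ by fixing one argument at the identity $e \in G$. Expanding $p_\beta(x,y) = \sum_{\xi \in \Lambda^*} e^{-\beta \lambda_\xi}\, \chi_\xi(x)\,\overline{\chi_\xi(y)}$ with $\lambda_\xi = 4\pi^2\|\xi\|_{\mathfrak g^*}^2$ — the standard spectral expansion of the heat kernel, convergent because the eigenvalues grow quadratically so $\sum e^{-\beta\lambda_\xi} < \infty$ — and setting $y = e$, where every character satisfies $\chi_\xi(e) = 1$, yields
\[
H_\beta(x) \;=\; \sum_{\xi \in \Lambda^*} e^{-4\pi^2\beta\,\|\xi\|_{\mathfrak g^*}^2}\,\chi_\xi(x),
\]
which is the claimed expansion. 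The absolute and uniform convergence for fixed $\beta>0$ follows from the Weierstrass $M$-test, since $|\chi_\xi(x)| = 1$ for all $x$ and $\sum_{\xi \in \Lambda^*} e^{-4\pi^2\beta\|\xi\|_{\mathfrak g^*}^2}$ converges (the dual lattice $\Lambda^*$ has finite covolume, so this is dominated by a convergent Gaussian integral over $\mathfrak g^*$).

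Finally, the Fourier transform statement is immediate from the expansion: by the definition of $\widehat{H_\beta}$ from Section \ref{prelimsec} and the orthonormality $\int_G \chi_\xi \overline{\chi_\eta}\, d\nu = \delta_{\xi\eta}$, integrating the (uniformly convergent, hence termwise integrable) series against $\overline{\chi_\eta}$ picks out the single coefficient $e^{-4\pi^2\beta\|\eta\|_{\mathfrak g^*}^2}$, so $\widehat{H_\beta}(\eta) = e^{-4\pi^2\beta\|\eta\|_{\mathfrak g^*}^2}$. I expect no genuine obstacle here; the only point requiring care is the bookkeeping of the normalization constant relating $\widehat G$ to $\Lambda^*$, which fixes the precise constant $4\pi^2$ in the exponent. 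One could alternatively bypass the eigenvalue computation entirely by invoking the product structure: for $G = (\R/\Z)^n$ with the standard metric the heat kernel factors as a product of one-dimensional Jacobi theta functions whose Fourier coefficients are classically $e^{-4\pi^2\beta k^2}$, and a general $G$ is reduced to this case by a linear change of variables diagonalizing the chosen inner product on $\mathfrak g$ — but the spectral argument above is cleaner and self-contained.
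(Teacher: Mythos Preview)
Your proposal is correct and follows essentially the same route as the paper: compute the Laplacian eigenvalue $\Delta\chi_\xi = 4\pi^2\|\xi\|_{\mathfrak g^*}^2\,\chi_\xi$ (the paper does this via an orthonormal frame $\widetilde X_i$ with $\Delta=-\sum_i\widetilde X_i^2$, you via lifting to $\mathfrak g$), and then invoke the standard spectral expansion of the heat kernel. The paper's sketch stops at the eigenvalue computation and says the rest follows ``by standard arguments''; you have simply written out those standard arguments (Mercer-type expansion, evaluation at the identity, Weierstrass $M$-test, orthonormality for the Fourier transform), which is entirely appropriate.
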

\begin{proof} (Sketch)
Let $X_i, i=1,\ldots, n$, be an orthonormal basis of $\mathfrak g$ with respect to the given inner product. Then the associated invariant vector fields $\widetilde X_i$ on $G$ form a global orthonormal frame and one has
$ \Delta=-\sum_i \widetilde X^2_i$. Using this, one easily checks that 
$$ 
\Delta \chi_\xi(x) = 4\pi^2\,\|\xi\|_{\mathfrak g^\ast}^2 
\quad \xi \in  \Lambda^*,
$$
which by standard arguments implies the desired conclusion.
\end{proof}
\begin{proposition}\label{hetkertoushp}
Let $G_1$ and $G_2$ be compact connected Abelian Lie groups with Lie algebras $\mathfrak g_1$ and $\mathfrak g_2$ and let $F: G_1 \rightarrow G_2$ be a Lie group homomorphism. Fix inner products on $\mathfrak g_1$ and $\mathfrak g_2$ and write $\mu^1_\beta$ and $\mu^2_\beta$ for the associated heat kernel measures.

Suppose that $(dF_e)^*: \mathfrak g^*_2 \rightarrow \mathfrak g^*_1$ (the dual of the differential of $F$ at the identity) is an isometry. Then $F(\mu^1_\beta)=\mu^2_\beta$ for all $\beta >0$.
\end{proposition}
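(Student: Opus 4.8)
The plan is to verify the Fourier-transform criterion of Lemma~\ref{pushfofour}: it suffices to show $\widehat{\mu^2_\beta}=\widehat{\mu^1_\beta}\circ\widehat F$ as functions on $\widehat{G_2}$. The two ingredients are the explicit heat kernel expansion of Lemma~\ref{hetkerexpa} and the identification of the dual homomorphism $\widehat F$ with $(dF_e)^*$ under the isomorphisms $\widehat{G_i}\cong\Lambda_i^*$.

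First I would dispose of the normalization constants. Since $\int_{G_i}H^i_\beta\,d\nu_i=\widehat{H^i_\beta}(0)=1$ by Lemma~\ref{hetkerexpa} (the value at $0\in\Lambda_i^*$, i.e. the trivial character), the measure $H^i_\beta\,\nu_i$ is already a probability measure, so $z^i_\beta=1$ and $\mu^i_\beta=H^i_\beta\,\nu_i$. Hence, for every $\xi\in\Lambda_i^*$,
\[
\widehat{\mu^i_\beta}(\chi_\xi)=\widehat{H^i_\beta}(\xi)=e^{-4\pi^2\beta\,\|\xi\|^2_{\mathfrak g_i^*}}.
\]

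The substantive step is to show that, under $\widehat{G_i}\cong\Lambda_i^*\subset\mathfrak g_i^*$, the dual map $\widehat F$ is the restriction of $(dF_e)^*$ to $\Lambda_2^*$. Recall that the character $\chi_\xi$ attached to $\xi\in\Lambda_i^*$ is determined by $\chi_\xi(\exp_{G_i}Y)=e^{2\pi i\,\xi(Y)}$ for $Y\in\mathfrak g_i$ (the well-definedness on $G_i\cong\mathfrak g_i/\Lambda_i$ being precisely the statement that $\xi$ lies in the dual lattice). Using the functoriality of the exponential map, $F\circ\exp_{G_1}=\exp_{G_2}\circ\,dF_e$, one computes for $\xi\in\Lambda_2^*$ and $X\in\mathfrak g_1$ that
\[
\big(\widehat F(\chi_\xi)\big)(\exp_{G_1}X)=\chi_\xi\big(\exp_{G_2}(dF_e X)\big)=e^{2\pi i\,\xi(dF_e X)}=e^{2\pi i\,\big((dF_e)^*\xi\big)(X)},
\]
so $\widehat F(\chi_\xi)=\chi_{(dF_e)^*\xi}$; in particular $(dF_e)^*$ maps $\Lambda_2^*$ into $\Lambda_1^*$.

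Combining the two steps with the hypothesis that $(dF_e)^*$ is an isometry finishes the proof: for every $\xi\in\Lambda_2^*$,
\[
\big(\widehat{\mu^1_\beta}\circ\widehat F\big)(\chi_\xi)=\widehat{\mu^1_\beta}\big(\chi_{(dF_e)^*\xi}\big)=e^{-4\pi^2\beta\,\|(dF_e)^*\xi\|^2_{\mathfrak g_1^*}}=e^{-4\pi^2\beta\,\|\xi\|^2_{\mathfrak g_2^*}}=\widehat{\mu^2_\beta}(\chi_\xi),
\]
the third equality being where the isometry property is used. By Lemma~\ref{pushfofour}, $F(\mu^1_\beta)=\mu^2_\beta$ for all $\beta>0$. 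I expect the only point requiring care to be the bookkeeping in the identification $\widehat F\leftrightarrow(dF_e)^*$ — keeping the $2\pi$ in $\chi_\xi(\exp Y)=e^{2\pi i\,\xi(Y)}$ consistent with the normalization implicit in Lemma~\ref{hetkerexpa} and with the convention for the dual lattice $\Lambda_i^*$ — after which everything is formal.
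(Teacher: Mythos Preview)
Your proof is correct and follows essentially the same route as the paper: identify $\widehat F$ with the restriction of $(dF_e)^*$ to the dual lattices, invoke Lemma~\ref{hetkerexpa} for the Fourier transforms of the heat kernels, use the isometry hypothesis, and conclude via Lemma~\ref{pushfofour}. You have simply filled in more detail (the explicit verification $\widehat F(\chi_\xi)=\chi_{(dF_e)^*\xi}$ via the exponential map, and the observation that $z^i_\beta=1$) than the paper's terse version.
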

\begin{proof}
We note that under the identification of $\Lambda^*_i$ with $\widehat{G}_i$, the restriction of $(dF_e)^*$ to the dual lattices coincides with the dual map $\widehat{F}$. Hence,
writing $H_{\beta}^i$ for the associated heat kernel on $G_i$, we find by Lemma \ref{hetkerexpa} that $\widehat{H_{\beta}^2}=\widehat{H_{\beta}^1} \circ \widehat{F}$. The claim now follows from Lemma \ref{pushfofour}.
\end{proof}

\subsection{Pushforward measure properties}
In this section we collect several basic properties of pushforward measures on quotients spaces and topological groups that will be needed in the sequel.
\begin{lemma}\label{pushmequot} (1) Let \(X\) be a compact topological space, \(\mu\) a finite Borel measure on \(X\), and let \(\sim\) be an equivalence
	relation with quotient map \(\pi:X\to Y:=X/\!\sim\). Suppose \(f:X\to[0,\infty)\) is continuous
	and constant on equivalence classes. Then there exists a continuous  \(g:Y\to[0,\infty)\) with
	\(f=g\circ\pi\), and
	$\pi(f\mu) \;=\; g\,\pi(\mu)$.
		
	(2) Let \(X_1,X_2\) be compact spaces, \(F:X_1\to X_2\) a homeomorphism, and \(\nu_1,\nu_2\) finite Borel measures with
	\(F(\nu_1)=\nu_2\). If \(f:X_2\to[0,\infty)\) is continuous, then
	$
	F^{-1}(f\,\nu_2)=(f\circ F)\,\nu_1$.
	
\end{lemma}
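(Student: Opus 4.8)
The plan is to reduce both statements to the change of variables formula for pushforward measures already used above, namely $\int_Y h\,d(F(\mu)) = \int_X (h\circ F)\,d\mu$ for a continuous map $F$ and a nonnegative Borel function $h$, together with one elementary identity that I would prove once and apply twice: \emph{if $F:X\to Y$ is continuous, $\mu$ a finite Borel measure on $X$, and $h\geq 0$ is Borel on $Y$, then $h\cdot F(\mu)=F\big((h\circ F)\,\mu\big)$}. To verify this, evaluate both sides on an arbitrary Borel set $B\subseteq Y$: the left-hand side is $\int_B h\,d(F(\mu))$, which by the change of variables formula equals $\int_{F^{-1}(B)}(h\circ F)\,d\mu=\big((h\circ F)\mu\big)\big(F^{-1}(B)\big)$, and the last expression is by definition $F\big((h\circ F)\mu\big)(B)$. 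All measures appearing are finite since a continuous function on a compact space is bounded, and all sets involved are Borel since continuous maps pull back Borel sets to Borel sets, so $h\circ F$ is Borel as well.

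For part (1), I would first produce $g$ via the universal property of the quotient topology: since $f$ is constant on equivalence classes it factors set-theoretically as $f=g\circ\pi$ for a unique $g:Y\to[0,\infty)$, and $g$ is continuous because for open $U\subseteq[0,\infty)$ the preimage $\pi^{-1}\big(g^{-1}(U)\big)=f^{-1}(U)$ is open, whence $g^{-1}(U)$ is open in the quotient topology. Applying the identity above with $F=\pi$ and $h=g$ then gives $g\cdot\pi(\mu)=\pi\big((g\circ\pi)\mu\big)=\pi(f\mu)$, which is the claim.

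For part (2), I would apply the same identity with $F$ the given homeomorphism and $h=f$, using $\nu_2=F(\nu_1)$, to obtain $f\,\nu_2=F\big((f\circ F)\,\nu_1\big)$. Since $F$ is a bijection with continuous inverse and pushforward is functorial, pushing both sides forward by $F^{-1}$ and using $F^{-1}\circ F=\Id$ yields $F^{-1}(f\,\nu_2)=(f\circ F)\,\nu_1$. Equivalently, one checks this directly on Borel sets $B\subseteq X_1$ via $F^{-1}(f\nu_2)(B)=(f\nu_2)(F(B))$, change of variables for $F(\nu_1)=\nu_2$, and $F^{-1}(F(B))=B$.

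I do not expect a substantial obstacle: the content is entirely the change of variables formula and routine pushforward bookkeeping. The only point genuinely requiring care is the continuity of the factorization $g$ in part (1) — this is exactly where the quotient topology on $Y$ is used, so that $g$ is continuous rather than merely Borel — together with the verification that the stated measurability and finiteness hypotheses make every pushforward and every integral above well defined.
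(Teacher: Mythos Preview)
Your proof is correct and follows exactly the approach the paper indicates: both parts reduce to the change of variables formula and functoriality of pushforwards, with the quotient universal property supplying the continuity of $g$ in part~(1). The paper's own proof is a single sentence invoking precisely these two ingredients, so your argument is simply a fleshed-out version of it.
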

\begin{proof}
Both statements follow directly from the change of variables formula and functoriality for pushforward measures.
\end{proof}

\begin{lemma}\label{pushmetopgruo} (1) Let \(G_1\) and \(G_2\) be compact topological groups with normalized Haar measures \(\nu_1,\nu_2\).
	If \(F:G_1\to G_2\) is a surjective continuous group homomorphism, then
	$F(\nu_1)=\nu_2$.
	
	(2) In the setting of Part (1), let \(\sim\) be an equivalence relation on \(G_1\) such that \(F\) is constant on equivalence classes,
	and let \(\pi:G_1\to G_1/\!\sim\) be the quotient map. Then there exists a continuous
	\(\bar F:G_1/\!\sim\;\to G_2\) with \(F=\bar F\circ\pi\), and if \(\bar\nu_1:=\pi(\nu_1)\), then $\bar F(\bar\nu_1)=\nu_2$.
\end{lemma}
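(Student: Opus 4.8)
The plan is to reduce both parts to the uniqueness of the normalized Haar measure on a compact group.

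For Part (1), I would first note that $F(\nu_1)$ is a Borel probability measure on $G_2$: it is Borel since $F$ is continuous, and it has total mass $\nu_1(F^{-1}(G_2)) = \nu_1(G_1) = 1$. The key step is to show that $F(\nu_1)$ is left-invariant. Given $g \in G_2$, surjectivity of $F$ lets me pick $h \in G_1$ with $F(h) = g$; since $F$ is a homomorphism, $L_g \circ F = F \circ L_h$ (here $L$ denotes left translation), because $g\,F(x) = F(h)\,F(x) = F(hx)$. Functoriality of the pushforward together with left-invariance of $\nu_1$ then gives
\[
(L_g)_*\bigl(F(\nu_1)\bigr) = (L_g \circ F)_*\nu_1 = (F \circ L_h)_*\nu_1 = F_*\bigl((L_h)_*\nu_1\bigr) = F_*\nu_1 = F(\nu_1).
\]
As $g$ is arbitrary, $F(\nu_1)$ is a left-invariant Borel probability measure on $G_2$, hence equals $\nu_2$ by uniqueness of Haar measure.

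For Part (2), the existence of a well-defined $\bar F$ with $F = \bar F \circ \pi$ is immediate from the hypothesis that $F$ is constant on $\sim$-classes, and continuity of $\bar F$ follows from the universal property of the quotient topology (for $U \subseteq G_2$ open, $\pi^{-1}(\bar F^{-1}(U)) = F^{-1}(U)$ is open, so $\bar F^{-1}(U)$ is open in $G_1/\!\sim$). Functoriality of pushforward together with Part (1) then yields
\[
\bar F(\bar\nu_1) = \bar F\bigl(\pi(\nu_1)\bigr) = (\bar F \circ \pi)(\nu_1) = F(\nu_1) = \nu_2.
\]

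I expect the only point requiring care to be the invocation of Haar uniqueness in Part (1), namely that a left-invariant Borel probability measure on a compact group is unique. In the metrizable setting relevant later in the paper (compact Lie groups and countable products of them) every finite Borel measure is automatically regular, so this is the textbook statement; in full generality one additionally notes that the pushforward of a Radon measure under a continuous map of compact spaces is again Radon before citing the uniqueness theorem. I do not anticipate any deeper obstacle.
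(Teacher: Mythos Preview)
Your proposal is correct and follows essentially the same approach as the paper: the paper's proof simply states that one checks $F(\nu_1)$ is $G_2$-left invariant and then invokes uniqueness of Haar measure for Part (1), and derives Part (2) from functoriality of pushforwards together with Part (1). Your argument fills in exactly these details (including the use of surjectivity to lift a left translation), so there is nothing to add.
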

\begin{proof}
To establish Part (1), one checks that the probability measure $F(\nu_1)$ is $G_2$-left invariant and hence coincides with the unique normalized Haar measure \(\nu_2\). Part (2) then follows from the functoriality of pushforward measures and Part (1).
\end{proof}

\subsection{Group actions and projective limits}
In what follows, $G$ is an arbitrary group and $\mathbf{Set}$ stands for the category of sets. Let  $(I, \leq)$ be a (possibly uncountable) directed set equipped with a left $G$-action by order preserving automorphisms.
\begin{definition}\label{euqiprojde}
 A \emph{$G$-equivariant projective system in $\mathbf{Set}$} consists of:
\begin{itemize}
	\item a projective system $(\{X_i\}_{i\in I},\{P_{ij}:X_j\to X_i\}_{i\le j})$ in $\mathbf{Set}$, and
	\item for each $g\in G$, a family of maps
	\[
	\phi_{g,i}: X_i \rightarrow X_{g\cdot i},\quad i\in I,
	\]
\end{itemize}
	satisfying the following properties:
	
	(1) (Naturality) For all $i\le j$ in $I$,
		\begin{equation*}
			P_{g\cdot i,\,g\cdot j}\circ \phi_{g,j}
			\;=\;
			\phi_{g,i}\circ P_{ij}
		\end{equation*}
		
	(2) (Unit) $\phi_{e,i}=\Id_{X_i}$ for all $i\in I$.
		
	(3) (Multiplicativity) For all $g,h\in G$ and $i\in I$,
		\begin{equation*}
			\phi_{gh,i} \;=\; \phi_{g,\,h\cdot i}\circ \phi_{h,i}.
		\end{equation*}
\end{definition}
In other words, a $G$-equivariant projective system is a functor from the action groupoid associated to the action of $G$ on $I$ to $\mathbf{Set}$.

We denote the limit of a $G$-equivariant projective system $(\{X_i\},\{P_{ij}\})$ in $\mathbf{Set}$ by $X_\infty$ and for every $g \in G$ define a map $\Phi_g: X_\infty \to X_\infty$ via
 \begin{equation}\label{indproactde}
	\, (\Phi_g(x))_{\,g\cdot i}\;=\;\phi_{g,i}\big(x_i\big), \,\quad x=\{x_i\} \in X_\infty.
\end{equation}
\begin{lemma}\label{lemproact}
The maps $\Phi_g$ are well-defined and make $X_\infty$ into a left $G$-set.
\end{lemma}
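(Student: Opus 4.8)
The plan is to verify, in this order: (i) that formula \eqref{indproactde} assigns a value to \emph{every} coordinate of $\Phi_g(x)$; (ii) that the family so defined is compatible with the transition maps $P_{jk}$, so that $\Phi_g(x)$ is genuinely an element of $X_\infty$; and (iii) that the unit and multiplicativity axioms of Definition \ref{euqiprojde} propagate to the identities $\Phi_e=\Id_{X_\infty}$ and $\Phi_g\circ\Phi_h=\Phi_{gh}$, which is exactly the assertion that $g\mapsto\Phi_g$ is a left $G$-action. For (i), I would note that since $G$ acts on $I$ by order automorphisms, the map $i\mapsto g\cdot i$ is a bijection of $I$ with inverse $j\mapsto g^{-1}\cdot j$; hence every $j\in I$ is uniquely of the form $g\cdot i$, and \eqref{indproactde} prescribes $(\Phi_g(x))_j=\phi_{g,\,g^{-1}\cdot j}\bigl(x_{\,g^{-1}\cdot j}\bigr)$ without ambiguity.

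For (ii), fix $j\le k$ in $I$ and put $i=g^{-1}\cdot j$, $\ell=g^{-1}\cdot k$; since $g$ acts by order automorphisms we have $i\le\ell$. What must be shown is $P_{jk}\bigl((\Phi_g(x))_k\bigr)=(\Phi_g(x))_j$, that is, $P_{g\cdot i,\,g\cdot\ell}\bigl(\phi_{g,\ell}(x_\ell)\bigr)=\phi_{g,i}(x_i)$. The naturality axiom (1) rewrites the left-hand side as $\phi_{g,i}\bigl(P_{i\ell}(x_\ell)\bigr)$, and this equals $\phi_{g,i}(x_i)$ because $x\in X_\infty$ forces $P_{i\ell}(x_\ell)=x_i$. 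Thus $\Phi_g$ is a well-defined self-map of $X_\infty$.

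For (iii), the unit axiom (2) gives $(\Phi_e(x))_i=\phi_{e,i}(x_i)=x_i$ for every $i$, so $\Phi_e=\Id_{X_\infty}$. For multiplicativity, set $y=\Phi_h(x)$, so that $y_{h\cdot i}=\phi_{h,i}(x_i)$, and evaluate $\Phi_g(y)$ at the index $g\cdot(h\cdot i)=(gh)\cdot i$; this gives $\phi_{g,\,h\cdot i}(y_{h\cdot i})=\phi_{g,\,h\cdot i}\bigl(\phi_{h,i}(x_i)\bigr)$, which by the multiplicativity axiom (3) equals $\phi_{gh,i}(x_i)=(\Phi_{gh}(x))_{(gh)\cdot i}$. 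Since every index of $I$ has this form, $\Phi_g\circ\Phi_h=\Phi_{gh}$; combined with $\Phi_e=\Id$ this is precisely a left $G$-action (and each $\Phi_g$ is then a bijection, with inverse $\Phi_{g^{-1}}$).

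I expect no essential difficulty here; the one place to be careful is the bookkeeping with the $G$-action on the index set — in particular, using that $g$ acts by order \emph{automorphisms}, so that $j\le k$ is equivalent to $g^{-1}\cdot j\le g^{-1}\cdot k$, which is exactly what licenses the use of the naturality axiom in step (ii). A sloppy argument is most likely to go astray there, or by conflating left- and right-action conventions when checking the multiplicativity identity.
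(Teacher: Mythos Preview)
Your argument is correct and follows exactly the route sketched in the paper's own proof: well-definedness from the naturality axiom~(1), and the $G$-action identities from the unit and multiplicativity axioms~(2)--(3). You simply unpack in detail what the paper states in two sentences; the additional care you take with the bijectivity of $i\mapsto g\cdot i$ and the order-automorphism property is appropriate and does not diverge from the intended line.
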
 
\begin{proof}
The fact that $\Phi_g$ is well-defined follows from property (1) in Definition \ref{euqiprojde}. Using properties (2) and (3), one verifies that  the maps $\Phi_g$ form a $G$-action.
\end{proof}
We recall that arbitrary inductive and projective limits exist in 
$\mathbf{Hilb}_1$, the category of Hilbert spaces and linear contractions, see \cite[Section 2]{Gr}.
\begin{definition}\label{euqiprohilb}
	A \emph{$G$-equivariant projective system in $\mathbf{Hilb}_1$} consists of:
	\begin{itemize}
		\item a projective system $(\{\mathcal{H}_{i}\}_{i\in I},\{P_{ij}\}_{i\le j})$ in $\mathbf{Hilb}_1$, and
		\item for each $g\in G$, a family of linear contractions
		\[
		\phi_{g,i}: \mathcal{H}_i \rightarrow \mathcal{H}_{g\cdot i},\quad i\in I,
		\]
	\end{itemize}
	satisfying the properties (1)--(3) in Definition \ref{euqiprojde}.
\end{definition}

\begin{proposition}\label{proponisomgievi}
The projective limit $\mathcal{H}_\infty$ of a $G$-equivariant projective system $(\{\mathcal{H}_{i}\}_{i\in I},\{P_{ij}\}_{i\le j})$ in $\mathbf{Hilb}_1$ is endowed with a natural $G$-action by linear contractions. If the maps $\phi_{g,i}$ are isometries, this $G$-action is unitary.
\end{proposition}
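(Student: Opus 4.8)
The plan is to realize $\mathcal{H}_\infty$ concretely and to define the $G$-action by the very coordinate formula \eqref{indproactde} already used in Lemma \ref{lemproact}, then check that it descends from the set-theoretic projective limit to $\mathcal{H}_\infty$ and is compatible with the Hilbert space structure. Recall from \cite[Section 2]{Gr} that the limit in $\mathbf{Hilb}_1$ may be taken to be the space of bounded coherent sequences
\[
\mathcal{H}_\infty=\Bigl\{\,x=\{x_i\}\in\textstyle\prod_{i\in I}\mathcal{H}_i\ :\ P_{ij}x_j=x_i\ \text{for}\ i\le j,\ \sup_{i}\|x_i\|_{\mathcal{H}_i}<\infty\,\Bigr\},
\]
with canonical coordinate contractions $P_i:\mathcal{H}_\infty\to\mathcal{H}_i$ satisfying $P_{ij}\circ P_j=P_i$, the norm being $\|x\|_{\mathcal{H}_\infty}=\sup_i\|x_i\|_{\mathcal{H}_i}=\lim_i\|x_i\|_{\mathcal{H}_i}$ (the net $i\mapsto\|x_i\|_{\mathcal{H}_i}$ is monotone non-decreasing because the $P_{ij}$ are contractions). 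For $g\in G$ I would set $\Phi_g(x)$ to be the sequence determined by $(\Phi_g x)_{g\cdot i}=\phi_{g,i}(x_i)$, i.e.\ $(\Phi_g x)_k=\phi_{g,\,g^{-1}\cdot k}\bigl(x_{g^{-1}\cdot k}\bigr)$.

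The first step is to verify that $\Phi_g$ maps $\mathcal{H}_\infty$ into itself. Coherence of the sequence $\Phi_g(x)$ is exactly the computation carried out in the proof of Lemma \ref{lemproact}, which uses the naturality axiom (1) of Definition \ref{euqiprojde} together with $P_{ij}x_j=x_i$. For the norm bound I would use that $g$ acts on $I$ by an order-preserving \emph{automorphism}, so that $i\mapsto g\cdot i$ is a bijection of $I$ onto itself; hence
\[
\|\Phi_g(x)\|_{\mathcal{H}_\infty}=\sup_{k\in I}\bigl\|(\Phi_g x)_k\bigr\|_{\mathcal{H}_k}=\sup_{i\in I}\bigl\|\phi_{g,i}(x_i)\bigr\|_{\mathcal{H}_{g\cdot i}}\le\sup_{i\in I}\|x_i\|_{\mathcal{H}_i}=\|x\|_{\mathcal{H}_\infty},
\]
the inequality because each $\phi_{g,i}$ is a contraction. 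Thus $\Phi_g$ is a well-defined linear contraction on $\mathcal{H}_\infty$, natural in the sense that $P_{g\cdot i}\circ\Phi_g=\phi_{g,i}\circ P_i$ for all $i$. That $\Phi_e=\Id_{\mathcal{H}_\infty}$ and $\Phi_g\circ\Phi_h=\Phi_{gh}$ is then immediate: the defining formula \eqref{indproactde} is identical to the one in Lemma \ref{lemproact}, so these identities are inherited from that lemma (alternatively, they follow at once from axioms (2) and (3)). This yields the first assertion.

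For the second assertion I would observe that if each $\phi_{g,i}$ is an isometry then the middle inequality in the displayed estimate becomes an equality, so each $\Phi_g$ is a linear isometry of $\mathcal{H}_\infty$; moreover, by the $G$-action property just established, $\Phi_g\circ\Phi_{g^{-1}}=\Phi_{g^{-1}}\circ\Phi_g=\Phi_e=\Id_{\mathcal{H}_\infty}$, so $\Phi_g$ is invertible. A surjective linear isometry between Hilbert spaces is unitary, which completes the argument.

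I expect the only non-routine point to be the inclusion $\Phi_g(\mathcal{H}_\infty)\subseteq\mathcal{H}_\infty$: preservation of coherence is precisely the content of the naturality axiom (1), while preservation of boundedness (and the contraction estimate) genuinely relies on $g$ \emph{permuting} the index set $I$, so that the supremum defining the limit norm ranges over the same index set before and after applying $\Phi_g$ — equivariance of the $G$-action on $I$ is exactly what makes this work. Everything else is either a one-line estimate or a direct transcription of Lemma \ref{lemproact}. An alternative, more categorical route avoids choosing a model of $\mathcal{H}_\infty$: by axiom (1) the family $\{\phi_{g,i}\circ P_i\}_{i\in I}$, reindexed along $k=g\cdot i$, is a cone over $(\{\mathcal{H}_k\}_{k\in I},\{P_{kl}\}_{k\le l})$, so the universal property of the limit in $\mathbf{Hilb}_1$ produces a unique contraction $\Phi_g$ with $P_{g\cdot i}\circ\Phi_g=\phi_{g,i}\circ P_i$, and the action axioms drop out of uniqueness together with (2) and (3); the isometry upgrade still uses the limit-norm formula as above.
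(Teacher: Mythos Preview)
Your proposal is correct and follows essentially the same approach as the paper: define the action by the coordinate formula \eqref{indproactde}, invoke Lemma \ref{lemproact} for well-definedness and the action axioms, and then verify the contraction (respectively isometry) estimate on the limit norm. The paper's proof is a two-line sketch that labels the last step ``straightforward''; you have simply supplied that straightforward verification explicitly, including the observation that the $G$-action permutes $I$ so the supremum defining $\|\cdot\|_{\mathcal{H}_\infty}$ is preserved.
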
 
\begin{proof}
One defines the induced action as in (\ref{indproactde}) and utilizes Lemma \ref{lemproact}; the verification that these maps are contractions, respectively isometries is straightforward.
\end{proof}

\section{Villain action in Abelian polyhedral gauge theory}\label{vactinpoly}

\subsection{Basic notions}
Let $G$ be a compact connected Abelian Lie group with Lie algebra $\mathfrak g$ and $K$ a finite connected polyhedral cell complex of dimension $d>1$. We fix orientations of the cells of $K$ and write $C^k(K;G)$ (respectively $C^k(K;\mathfrak g)$) for the groups (vector spaces) of oriented $k$-cochains on $K$ with values in $G$ (respectively $\mathfrak g$). We note that $C^k(K;G)$ is also a compact connected Abelian Lie group (isomorphic to a product of finitely many copies of $G$) with Lie algebra $C^k(K;\mathfrak g)$ for every $k \geq 0$.

We shall make use of the usual cellular coboundary operators
$$ d_k^{\hspace{1pt}G}: C^k(K;G)\to  C^{k+1}(K;G),$$
$$ d_k^{\hspace{1pt}\mathfrak g}: C^k(K;\mathfrak g)\to  C^{k+1}(K;\mathfrak g),$$
for $k=0,1$ (see e.g. \cite[Chapter IX]{Mas} for a definition using incidence numbers), which are Lie group homomorphisms.

We now suppose we are given an inner product on $C^2(K;\mathfrak g)$ and write $H^K_{\beta}$ for the associated heat kernel on $C^2(K;G)$. For every $\beta >0$ we define a Borel probability measure
\begin{equation}\label{genvildefmea}
\mu^K_{\beta}=z^K_\beta (H^K_{\beta} \circ d_1^{\hspace{1pt}G}) \cdot \nu^K_1,
\end{equation}
on $C^1(K;G)$, where $z^K_\beta$ is a normalization constant and $\nu^K_1$ is the normalized Haar measure on $C^1(K;G)$.

One readily checks, using the multiplicativity property of heat kernels, that in the special case when $K$ is a cubical lattice and the inner product on $C^2(K;\mathfrak g)$ is induced by the Euclidean product on $\mathfrak g$,  the measure $\mu^K_{\beta}$ coincides with the measure corresponding to the so called Villain action in lattice gauge theory with a coupling parameter $\beta$, as defined, for example in \cite[p. 6]{Sei}. Motivated by this, we shall refer  $\mu^K_{\beta}$ as the measure associated to the polyhedral gauge theory with Villain action on $K$.

We mention that when $G$ is not Abelian, one can still define a non-Abelian coboundary operator $d_1^{\hspace{1pt}G}$ and therefore the analogue of the measure $\mu^K_{\beta}$. This point of view towards non-Abelian lattice gauge theory will be explored in a sequel of this article.

We now observe that $\mathtt{Im}\hspace{1pt}d_1^{\hspace{1pt}G}$, the image of $d_1^{\hspace{1pt}G}$ is a closed subgroup of $C^2(K;G)$ and hence a compact connected Abelian Lie group. Moreover, the Lie algebra of $\mathtt{Im}\hspace{1pt}d_1^{\hspace{1pt}G}$ can naturally be identified with $\mathtt{Im}\hspace{1pt}d_1^{\hspace{1pt}\mathfrak g}$. Since $\mu^K_{\beta}$ depends only the values of $H^K_{\beta}$ on $\mathtt{Im}\hspace{1pt}d_1^{\hspace{1pt}G}$, it is natural to modify (\ref{genvildefmea}) as follows. We fix an inner product on $\mathtt{Im}\hspace{1pt}d_1^{\hspace{1pt}\mathfrak g}$, write $H^I_{\beta}$ for the associated heat kernel on $\mathtt{Im}\hspace{1pt}d_1^{\hspace{1pt}G}$, and set
\begin{equation}\label{genvildefmod}
	\mu^I_{\beta}=z^I_\beta (H^I_{\beta} \circ d_1^{\hspace{1pt}G}) \cdot \nu^I_1,
\end{equation}
where $z^I_\beta$ is a normalization constant and $\nu^K_1$ is the normalized Haar measure on $\mathtt{Im}\hspace{1pt}d_1^{\hspace{1pt}G}$. When $d=2$ the operator $d_1^{\hspace{1pt}G}$ is surjective, hence $\mu^I_{\beta}=\mu^K_{\beta}$. When $d>2$ the restriction of $H^K_{\beta}$ to $\mathtt{Im}\hspace{1pt}d_1^{\hspace{1pt}G}$ does not coincide in general with $H^I_{\beta}$ which implies $\mu^I_{\beta} \neq \mu^K_{\beta}$.

\subsection{Gauge equivalence and pushforward measures}\label{gaueqipussec}
We observe that the usual gauge action of $C^0(K;G)$ on $C^1(K;G)$ can be written as
$$ c_0 \cdot c_1 = (d_0^{\hspace{1pt}G}c_0)c_1, \quad c_0 \in C^0(K;G),\,
c_1 \in C^1(K;G).$$
It follows that one has a canonical identification 
\begin{equation}\label{orbspaceide}
 C^1(K;G)/C^0(K;G) \cong C^1(K;G)/\mathtt{Im}\hspace{1pt}d_0^{\hspace{1pt}G},
\end{equation} 
hence the orbit space  $C^1(K;G)/C^0(K;G)$ is naturally a compact connected Abelian Lie group. Further, since $d_0^{\hspace{1pt}G} d_1^{\hspace{1pt}G}=1$, the operator $d_1^{\hspace{1pt}G}$ induces a map
\begin{equation}\label{mapindbyd}
\bar{d}_1^{\hspace{1pt}G}: C^1(K;G)/C^0(K;G) \to C^2(K;G).
\end{equation} 
\begin{lemma}\label{injofdone} Assume that the first homology group (with integer coefficients) of $K$ vanishes. Then  $\bar{d}_1^{\hspace{1pt}G}$ is injective and maps $C^1(K;G)/C^0(K;G)$ homeomorphically onto $\mathtt{Im}\hspace{1pt}d_1^{\hspace{1pt}G}$.
\end{lemma}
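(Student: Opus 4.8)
The plan is to unwind the definitions and reduce the statement to a standard fact about cochain complexes. First I would observe that $\bar{d}_1^{\hspace{1pt}G}$ is automatically a continuous homomorphism of compact Lie groups: it is induced by the continuous homomorphism $d_1^{\hspace{1pt}G}$ on the quotient by $\mathtt{Im}\hspace{1pt}d_0^{\hspace{1pt}G}$, which makes sense precisely because $d_1^{\hspace{1pt}G}\circ d_0^{\hspace{1pt}G}=1$, so $\mathtt{Im}\hspace{1pt}d_0^{\hspace{1pt}G}\subseteq \ke\, d_1^{\hspace{1pt}G}$. Its image is obviously $\mathtt{Im}\hspace{1pt}d_1^{\hspace{1pt}G}$. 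Since a continuous bijective homomorphism between compact Lie groups is automatically a homeomorphism (indeed an isomorphism of Lie groups), the whole statement collapses to the single claim that $\bar{d}_1^{\hspace{1pt}G}$ is injective, i.e.\ that $\ke\, d_1^{\hspace{1pt}G}=\mathtt{Im}\hspace{1pt}d_0^{\hspace{1pt}G}$.

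Next I would prove this equality of subgroups of $C^1(K;G)$. The inclusion $\supseteq$ is just $d_1^{\hspace{1pt}G} d_0^{\hspace{1pt}G}=1$. For $\subseteq$, the natural approach is to pass through the universal coefficient picture. The cellular cochain complex $C^\bullet(K;G)$ is, up to canonical identification, $\mathrm{Hom}(C_\bullet(K;\Z),G)$ with the dual of the integral cellular boundary operator $\partial_\bullet$; equivalently one may tensor or use the fact that $G$ is a compact connected Abelian Lie group, hence divisible. A $1$-cochain $c_1$ with $d_1^{\hspace{1pt}G}c_1=0$ is precisely a homomorphism $C_1(K;\Z)\to G$ vanishing on all boundaries $\partial_2 C_2(K;\Z)$, i.e.\ a homomorphism $Z_1(K;\Z)/B_1(K;\Z)=H_1(K;\Z)$ — wait, more carefully: it is a homomorphism on $C_1$ killing $B_1:=\partial_2 C_2$, so it factors through $C_1/B_1$. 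The hypothesis $H_1(K;\Z)=0$ says $Z_1=B_1$ where $Z_1=\ke\,\partial_1$. I would then argue that any homomorphism $C_1/B_1\to G$ extends along the inclusion $C_1/B_1 \hookrightarrow$ (something splitting off $C_0$-direction); concretely, the short exact sequence $0\to Z_1/B_1 \to C_1/B_1 \xrightarrow{\partial_1} B_0 \to 0$ with $Z_1/B_1=H_1(K;\Z)=0$ shows $C_1/B_1\xrightarrow{\ \partial_1\ }B_0$ is an isomorphism, and $B_0=\mathtt{Im}\,\partial_1\subseteq C_0$ is free abelian, so the sequence $0\to B_0\to C_0\to C_0/B_0\to 0$ splits, giving a retraction $r:C_0\to B_0$. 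Pre-composing $c_1$ (viewed on $C_1/B_1\cong B_0$) with $r\circ \partial_0^{\mathrm{cell}}$... — the cleanest phrasing: a homomorphism $c_1:C_1/B_1\to G$ becomes, via $\partial_1$, a homomorphism $B_0\to G$, which extends to $\tilde c_0:C_0\to G$ since $B_0$ is a direct summand of the free abelian group $C_0$, and then $d_0^{\hspace{1pt}G}\tilde c_0=c_1$ by construction because $d_0^{\hspace{1pt}G}$ is dual to $\partial_1$. Hence $c_1\in\mathtt{Im}\hspace{1pt}d_0^{\hspace{1pt}G}$.

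Finally I would assemble: $\ke\, d_1^{\hspace{1pt}G}=\mathtt{Im}\hspace{1pt}d_0^{\hspace{1pt}G}$ gives injectivity of $\bar{d}_1^{\hspace{1pt}G}$ on $C^1(K;G)/\mathtt{Im}\hspace{1pt}d_0^{\hspace{1pt}G}=C^1(K;G)/C^0(K;G)$ (using the identification \eqref{orbspaceide}), and combined with surjectivity onto $\mathtt{Im}\hspace{1pt}d_1^{\hspace{1pt}G}$ and compactness, $\bar{d}_1^{\hspace{1pt}G}$ is a homeomorphism onto $\mathtt{Im}\hspace{1pt}d_1^{\hspace{1pt}G}$.

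The main obstacle I anticipate is the extension step: showing that a homomorphism out of $B_0\subseteq C_0$ extends to all of $C_0$. This is where both the freeness of the integral cellular chain groups (so that subgroups like $B_0$ are again free and, being subgroups of a finitely generated free abelian group with free quotient $C_0/B_0\cong$ a subgroup of $C_{-1}$ or of $H_0$-related terms, split off) and, if one instead works group-theoretically with $G$ directly, the divisibility/injectivity of $G$ as an abelian group enter. One must be a little careful that $C_0/B_0$ is torsion-free so the splitting exists; this holds because $C_0/B_0$ injects into $C_{-1}=0$ is false — rather $B_0=\mathtt{Im}\,\partial_1$ and $C_0/B_0$ maps onto $\tilde H_0(K;\Z)=0$ (as $K$ is connected) with kernel $0$, wait $C_0/B_0 = C_0/\partial_1 C_1$; since $K$ is connected, $H_0(K;\Z)=\Z$ and $C_0/(\ke\,\partial_0 + B_0)$... the clean statement is that $C_0/B_0\cong \Z \oplus (\text{free})$ via augmentation, in any case torsion-free and finitely generated, hence free, hence the extension $\tilde c_0$ exists. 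I would take care to state this splitting cleanly rather than invoke divisibility of $G$, since the cellular chain groups are finitely generated free abelian and all relevant quotients are torsion-free.
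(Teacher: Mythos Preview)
Your argument is correct. Both you and the paper reduce the lemma to showing $\ke\,\bar{d}_1^{\hspace{1pt}G}=\ke\,d_1^{\hspace{1pt}G}/\mathtt{Im}\,d_0^{\hspace{1pt}G}=H^1(K;G)$ vanishes, with the homeomorphism claim then following from compactness. The paper dispatches this in one line by citing the universal coefficient theorem: since $H_0(K;\Z)\cong\Z$ is free, $\mathrm{Ext}^1(H_0,G)=0$, and $H_1(K;\Z)=0$ gives $\mathrm{Hom}(H_1,G)=0$, so $H^1(K;G)=0$. You instead unwind this computation by hand: a closed $1$-cochain factors through $C_1/B_1\cong B_0$ (using $H_1=0$), and then extends to $C_0$ because $C_0/B_0=H_0(K;\Z)\cong\Z$ is free, giving a splitting. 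Your route is more elementary and self-contained, at the cost of length; the paper's is cleaner if one is happy to invoke UCT. Note that your final paragraph's hesitation about $C_0/B_0$ is unnecessary: in the unreduced cellular complex $\partial_0=0$, so $C_0/B_0=H_0(K;\Z)=\Z$ exactly (not $\Z\oplus(\text{free})$), and the splitting follows immediately. Alternatively, as you mention, one may bypass the splitting entirely by using that a torus is divisible and hence injective as a $\Z$-module.
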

\begin{proof}
By (\ref{orbspaceide}), $\mathtt{Ker}\hspace{1pt}\bar{d}_1^{\hspace{1pt}G}$ 
may be identified with 
$\mathtt{Ker}\hspace{1pt}d_1^{\hspace{1pt}G}
/\mathtt{Im}\hspace{1pt}d_0^{\hspace{1pt}G}$, i.e. with the first cohomology group of $K$ with coefficients in $G$, hence the claim follows by the universal coefficient theorem.
\end{proof}
 We denote  by $\bar{\mu}^I_{\beta}$  the pushforward of $\mu^I_{\beta}$ to $C^1(K;G)/C^0(K;G)$ via the quotient map and by $\mu^H_{\beta}$ the heat kernel measure on $\mathtt{Im}\hspace{1pt}d_1^{\hspace{1pt}G}$ associated to $H^I_{\beta}$.
\begin{proposition}\label{pusfintrept}
Assume that the first homology group of $K$ vanishes. Then one has 
$ \bar{\mu}^I_{\beta}= (\bar{d}_1^{\hspace{1pt}G})^{-1}\mu^H_{\beta}$.
\end{proposition}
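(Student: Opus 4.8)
The plan is to identify $\bar\mu^I_\beta$ as the pushforward, under the homeomorphism $\bar{d}_1^{\hspace{1pt}G}$, of the heat kernel measure $\mu^H_\beta$, by tracing the measure $\mu^I_\beta$ through the two maps $C^1(K;G) \xrightarrow{\pi} C^1(K;G)/C^0(K;G) \xrightarrow{\bar{d}_1^{\hspace{1pt}G}} \mathtt{Im}\hspace{1pt}d_1^{\hspace{1pt}G}$, whose composite is $d_1^{\hspace{1pt}G}$ (corestricted to its image). By Lemma \ref{injofdone}, the vanishing of $H_1(K)$ guarantees that $\bar{d}_1^{\hspace{1pt}G}$ is a homeomorphism onto $\mathtt{Im}\hspace{1pt}d_1^{\hspace{1pt}G}$, so applying $(\bar{d}_1^{\hspace{1pt}G})^{-1}$ makes sense.

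First I would rewrite $\mu^I_\beta$ using the factorization $d_1^{\hspace{1pt}G} = \bar{d}_1^{\hspace{1pt}G} \circ \pi$: the density $H^I_\beta \circ d_1^{\hspace{1pt}G}$ equals $(H^I_\beta \circ \bar{d}_1^{\hspace{1pt}G}) \circ \pi$, i.e. it is the pullback along $\pi$ of a continuous nonnegative function on $C^1(K;G)/C^0(K;G)$ which is constant on equivalence classes. Hence Lemma \ref{pushmequot}(1) applies with $f = H^I_\beta \circ d_1^{\hspace{1pt}G}$ and $g = H^I_\beta \circ \bar{d}_1^{\hspace{1pt}G}$, giving
\[
\bar\mu^I_\beta = \pi(\mu^I_\beta) = z^I_\beta\,\pi\big((H^I_\beta\circ d_1^{\hspace{1pt}G})\,\nu^K_1\big) = z^I_\beta\,(H^I_\beta\circ\bar{d}_1^{\hspace{1pt}G})\,\pi(\nu^K_1).
\]
Next, by Lemma \ref{pushmetopgruo}(1) applied to the surjective continuous homomorphism $d_1^{\hspace{1pt}G}: C^1(K;G) \to \mathtt{Im}\hspace{1pt}d_1^{\hspace{1pt}G}$ (note that $d_1^{\hspace{1pt}G}$ is automatically open onto its image, being a continuous surjective homomorphism of compact groups), the pushforward of $\nu^K_1$ is the normalized Haar measure $\nu^I_1$ on $\mathtt{Im}\hspace{1pt}d_1^{\hspace{1pt}G}$; since $d_1^{\hspace{1pt}G} = \bar{d}_1^{\hspace{1pt}G}\circ\pi$, functoriality gives $\bar{d}_1^{\hspace{1pt}G}(\pi(\nu^K_1)) = \nu^I_1$. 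Finally I would apply Lemma \ref{pushmequot}(2) with the homeomorphism $F = \bar{d}_1^{\hspace{1pt}G}$, the measures $\pi(\nu^K_1)$ and $\nu^I_1$, and the continuous function $f = H^I_\beta$ on $\mathtt{Im}\hspace{1pt}d_1^{\hspace{1pt}G}$: it yields $(\bar{d}_1^{\hspace{1pt}G})^{-1}(H^I_\beta\,\nu^I_1) = (H^I_\beta\circ\bar{d}_1^{\hspace{1pt}G})\,\pi(\nu^K_1)$. Combining with the displayed equation and recalling $\mu^H_\beta = z^I_\beta\,H^I_\beta\,\nu^I_1$ (the heat kernel measure on $\mathtt{Im}\hspace{1pt}d_1^{\hspace{1pt}G}$ associated to $H^I_\beta$, with the same normalization constant), I obtain $\bar\mu^I_\beta = (\bar{d}_1^{\hspace{1pt}G})^{-1}\mu^H_\beta$.

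The routine bookkeeping is checking that the normalization constants match up — but they do, since $z^I_\beta$ is defined by the requirement that $\mu^I_\beta$ be a probability measure, pushforwards and the bijection $\bar{d}_1^{\hspace{1pt}G}$ preserve total mass, and $\mu^H_\beta$ uses the identical constant. The only genuine point requiring care is the hypothesis $H_1(K) = 0$: it is used exactly once, via Lemma \ref{injofdone}, to ensure $\bar{d}_1^{\hspace{1pt}G}$ is a homeomorphism (not merely a continuous bijection onto a possibly proper quotient) so that $(\bar{d}_1^{\hspace{1pt}G})^{-1}$ exists and Lemma \ref{pushmequot}(2) is applicable; without it the statement fails because $\bar{d}_1^{\hspace{1pt}G}$ has nontrivial kernel $H^1(K;G)$. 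I do not anticipate a real obstacle here — the proof is a clean three-lemma chase — the main thing is simply to invoke the right lemma at each arrow and to keep the direction of each pushforward straight.
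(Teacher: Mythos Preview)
Your proof is correct and follows essentially the same three-lemma chase as the paper: apply Lemma~\ref{pushmequot}(1) to pass to the quotient, use Lemma~\ref{pushmetopgruo} to identify the pushforward of Haar measure, and finish with Lemma~\ref{pushmequot}(2). The only cosmetic difference is that the paper directly names $\pi(\nu^K_1)$ as the normalized Haar measure $\nu^Q_1$ on $C^1(K;G)/C^0(K;G)$ (via Lemma~\ref{pushmetopgruo}(1) applied to $\pi$), whereas you keep it as $\pi(\nu^K_1)$ and compute $\bar d_1^{\hspace{1pt}G}(\pi(\nu^K_1))$ by factoring through $d_1^{\hspace{1pt}G}$; either route is fine.
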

\begin{proof}
By Lemma \ref{pushmequot}(1) one has 
$\bar{\mu}^I_{\beta}=z^I_{\beta}( H^I_{\beta}\circ \bar{d}_1^{\hspace{1pt}G})\nu^Q_1$, where $\nu^Q_1$ is the normalized Haar  measure on $C^1(K;G)/C^0(K;G)$. Further, we infer from Lemma \ref{pushmetopgruo} that $\bar{d}_1^{\hspace{1pt}G}(\nu^Q_1)$ coincides with the normalized Haar measure on $\mathtt{Im}\hspace{1pt}d_1^{\hspace{1pt}G}$. The claim now follows from Lemma \ref{pushmequot}(2).
\end{proof}
\section{Projective systems of measures}\label{projmlimmeasec}
\subsection{The general construction}
Let $\{K_i\}_{i=0}^{\infty}$ be a sequence of finite connected polyhedral cell complexes whose first homology group vanishes. We continue to use the notation of Section \ref{vactinpoly} and write
$$ d_{i,k}^{\hspace{1pt}G}: C^k(K_i;G)\to  C^{k+1}(K_i;G),$$
$$ d_{i,k}^{\hspace{1pt}\mathfrak g}: C^k(K_i;\mathfrak g)\to  C^{k+1}(K_i;\mathfrak g),$$
for the respective cellular coboundary operators.
Let $\langle \cdot, \cdot \rangle_i$ be an inner product on $\mathtt{Im}\hspace{1pt}d_{i,1}^{\hspace{1pt}\mathfrak g}$ and let
$$P^k_i: C^k(K_i;G) \to C^k(K_{i-1};G), \quad k=0,1,2,$$
be surjective Lie group homomorphisms forming a cochain map, i.e. satisfying
\begin{equation}\label{cochamapco}
	P^{k+1}_{i}d^G_{i,k} =d^G_{i-1,k} P^k_{i},  \quad  i>0, \, k=0,1.
\end{equation}
We observe that the linear maps between Lie algebras induced by $P^k_i$ also form a cochain map; in particular we obtain surjections
\begin{equation}\label{p2bulledef}
 (P^2_i)_{\bullet}: \mathtt{Im}\hspace{1pt}d_{i,1}^{\hspace{1pt}\mathfrak g} \to \mathtt{Im}\hspace{1pt}d_{i-1,1}^{\hspace{1pt}\mathfrak g}.
\end{equation}
We shall now construct, exactly as in \cite[Section 3.2]{Z1}, new ``renormalized'' inner products $\langle \cdot, \cdot \rangle_{i}^r$ on $\mathtt{Im}\hspace{1pt}d_{i,1}^{\hspace{1pt}\mathfrak g}$ out of the inner products $\langle \cdot, \cdot \rangle_{i}$  turning the maps $(P^2_i)_{\bullet}$ into co-isometries. We set 
$ \langle \cdot, \cdot \rangle_{0}^r=\langle \cdot, \cdot \rangle_{0}$
and, using the decomposition
\begin{equation}\label{basiorthde}
	\mathtt{Im}\hspace{1pt}d_{i,1}^{\hspace{1pt}\mathfrak g}=\mathtt{Im}\hspace{1pt}(P^2_{i})_{\bullet}^* \oplus (\mathtt{Im}\hspace{1pt}(P^2_{i})_{\bullet}^*)^{\perp},
\end{equation} 
for every $i>0$ inductively define $\langle \cdot, \cdot \rangle_{i}^r$   to be the pullback of $\langle \cdot, \cdot \rangle_{i-1}^r$ via $(P^2_{i})_{\bullet}$ on $\mathtt{Im}\hspace{1pt}(P^2_{i})_{\bullet}^*$ and to be equal to $\langle \cdot, \cdot \rangle_{i}$ on $(\mathtt{Im}\hspace{1pt}(P^2_{i})_{\bullet}^*)^{\perp}$.
Finally, we declare the two summands in (\ref{basiorthde}) to be orthogonal with respect to  $\langle \cdot, \cdot \rangle_{i}^r$. (Above, the adjoints and orthogonal complements are taken with respect to the inner products $\langle \cdot, \cdot \rangle_{i}$ restricted to $\mathtt{Im}\hspace{1pt}d_{i,1}^{\hspace{1pt}\mathfrak g}$.)

Next we introduce projective systems of cochains via setting
$$P^k_{ij}:=P^k_{i} P^k_{i-1} \cdots  P^k_{j+1}, \quad i>j, k=0,1,2,$$
and note that 
\begin{equation}\label{cochamapcoij}
	P^{k+1}_{ij}d^{\hspace{1pt}G}_{i,k} =d^{\hspace{1pt}G}_{j,k} P^k_{ij}, \quad i>j,\,k=0,1.
\end{equation}
Following (\ref{genvildefmod}), we define probability measures
\begin{equation}\label{genvildefiver}
	\mu^r_{\beta,i}=z_{\beta,i} (H^r_{\beta,i} \circ d_{i,1}^{\hspace{1pt}G}) \cdot \nu^i_1,
\end{equation}
on $C^1(K_i;G)$, where $z_{\beta,i}$ is a normalization constant, $H^r_{\beta,i}$ stands for the heat kernel on $\mathtt{Im}\hspace{1pt}d_{i,1}^{\hspace{1pt}G}$ associated to the inner product  $\langle \cdot, \cdot \rangle_{i}^r$, and $\nu^i_1$ is the normalized Haar measure on $\mathtt{Im}\hspace{1pt}d_{i,1}^{\hspace{1pt}G}$.
We denote by $\bar{\mu}^r_{\beta,i}$ the measures on $C^1(K_i;G)/C^0(K_i;G)$ induced by $\mu^r_{\beta,i}$ and observe that by 
(\ref{cochamapcoij}) and (\ref{orbspaceide}), $P^1_{ij}$ induces a group homomorphism
$$ \bar{P}^1_{ij}: C^1(K_i;G)/C^0(K_i;G)  \to C^1(K_j;G)/C^0(K_j;G), \quad i>j.$$
We are now ready to formulate the main result of this section.
\begin{theorem}\label{mainconstrthe}
	The triple $(C^1(K_i;G)/C^0(K_i;G),\bar{P}^1_{ij},\bar{\mu}^r_{\beta,i})$ is a projective system of measures.
\end{theorem}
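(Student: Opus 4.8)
The plan is to verify the two defining conditions of a projective system of measures: that the maps $\bar P^1_{ij}$ form a projective system (which is essentially immediate), and that the measures are compatible, i.e. $\bar P^1_{ij}(\bar\mu^r_{\beta,i}) = \bar\mu^r_{\beta,j}$ for all $i>j$. Since the $\bar P^1_{ij}$ are composites of the consecutive maps $\bar P^1_i := \bar P^1_{i,i-1}$, and pushforward is functorial, it suffices to treat the case $i>j$ with $j=i-1$; the general case follows by composing. So the real content is the identity $\bar P^1_i(\bar\mu^r_{\beta,i}) = \bar\mu^r_{\beta,i-1}$.

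The key step is to transport everything to the image subgroups via Proposition \ref{pusfintrept}. That proposition gives $\bar\mu^r_{\beta,i} = (\bar d^{\,G}_{i,1})^{-1}\mu^{H,r}_{\beta,i}$, where $\mu^{H,r}_{\beta,i}$ is the heat kernel measure on $\mathtt{Im}\,d^{\hspace{1pt}G}_{i,1}$ for the renormalized inner product $\langle\cdot,\cdot\rangle^r_i$. Using the cochain-map relation (\ref{cochamapcoij}) one checks that $\bar d^{\,G}_{i-1,1}\circ \bar P^1_i = (P^2_i|_{\mathtt{Im}})\circ \bar d^{\,G}_{i,1}$, i.e. there is a commuting square relating the quotient spaces, the image subgroups, and the maps $\bar P^1_i$ and the restriction of $P^2_i$. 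Since $\bar d^{\,G}_{i,1}$ and $\bar d^{\,G}_{i-1,1}$ are the homeomorphisms onto the image subgroups furnished by Lemma \ref{injofdone}, pushing forward through this square reduces the desired identity to showing that the restricted homomorphism $P^2_i: \mathtt{Im}\,d^{\hspace{1pt}G}_{i,1}\to \mathtt{Im}\,d^{\hspace{1pt}G}_{i-1,1}$ carries $\mu^{H,r}_{\beta,i}$ to $\mu^{H,r}_{\beta,i-1}$.

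For this last step I would invoke Proposition \ref{hetkertoushp}: it suffices that the dual of the differential at the identity of the restricted map $P^2_i$ be an isometry from $(\mathtt{Im}\,d^{\hspace{1pt}\mathfrak g}_{i-1,1},\langle\cdot,\cdot\rangle^r_{i-1})$ to $(\mathtt{Im}\,d^{\hspace{1pt}\mathfrak g}_{i,1},\langle\cdot,\cdot\rangle^r_i)$. The differential at the identity of the restricted $P^2_i$ is precisely the surjection $(P^2_i)_\bullet$ of (\ref{p2bulledef}), so its dual is $(P^2_i)_\bullet^*$, whose image is the first summand $\mathtt{Im}\,(P^2_i)_\bullet^*$ in the orthogonal decomposition (\ref{basiorthde}). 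By the very construction of $\langle\cdot,\cdot\rangle^r_i$ — it was defined to be the pullback of $\langle\cdot,\cdot\rangle^r_{i-1}$ along $(P^2_i)_\bullet$ on exactly that summand — the map $(P^2_i)_\bullet^*$ is an isometry onto its image, which is what is needed. (This is the point where the renormalization of the inner products is used; it is the same argument as in \cite[Section 3.2]{Z1}.)

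The main obstacle, and the step requiring the most care, is bookkeeping the identification of Lie algebras: one must be sure that the Lie algebra of $\mathtt{Im}\,d^{\hspace{1pt}G}_{i,1}$ is $\mathtt{Im}\,d^{\hspace{1pt}\mathfrak g}_{i,1}$ (noted in Section \ref{vactinpoly}), that under this identification $d(\bar d^{\,G}_{i,1})_e$ and the differential of the restricted $P^2_i$ are the maps asserted above, and that the adjoints and orthogonal complements appearing in (\ref{basiorthde}) — which are taken with respect to the \emph{unrenormalized} $\langle\cdot,\cdot\rangle_i$ — still make $(P^2_i)_\bullet^*$ an isometry for the \emph{renormalized} inner product, which holds because the two summands in (\ref{basiorthde}) were declared $\langle\cdot,\cdot\rangle^r_i$-orthogonal. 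Once these identifications are pinned down, the rest is a routine diagram chase combining Lemmas \ref{pushfofour}, \ref{pushmequot}, \ref{pushmetopgruo}, Proposition \ref{hetkertoushp} and Proposition \ref{pusfintrept}.
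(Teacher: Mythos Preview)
Your proposal is correct and follows essentially the same route as the paper's proof: reduce via Proposition \ref{pusfintrept} and the commuting square coming from (\ref{cochamapcoij}) to showing that $(\mathtt{Im}\,d_{i,1}^{\,G},P^2_{ij},\mu^{H,r}_{\beta,i})$ is a projective system of heat kernel measures, and then apply Proposition \ref{hetkertoushp}. One point to tighten: the map that must be an isometry for Proposition \ref{hetkertoushp} is the dual of $(P^2_i)_\bullet$ with respect to the \emph{renormalized} inner products, i.e.\ what the paper denotes $(P^2_i)_\bullet^{*,r}$, not the unrenormalized adjoint $(P^2_i)_\bullet^*$ used to form the decomposition (\ref{basiorthde}); these two maps have the same image (because the summands in (\ref{basiorthde}) are declared $\langle\cdot,\cdot\rangle^r_i$-orthogonal and the kernel of $(P^2_i)_\bullet$ is the second summand) but are generally different, and it is only the former that the pullback construction makes an isometry.
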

\begin{proof}
We write $\mu^{H,r}_{\beta,i}$ for the heat kernel measure on $\mathtt{Im}\hspace{1pt}d_{i,1}^{\hspace{1pt}G}$ associated to $H^r_{\beta,i}$ and notice that by construction the maps $(P^2_i)_{\bullet}^{*,r}$, where $(*,r)$ stands for the adjoint taken with respect to $\langle \cdot, \cdot \rangle_{i}^r$,  are isometries with respect to $\langle \cdot, \cdot \rangle_{i}^r$. It follows from Proposition \ref{hetkertoushp} that the triple $(\mathtt{Im}\hspace{1pt}d_{i,1}^{\hspace{1pt}G},P^2_{ij},\mu^{H,r}_{\beta,i})$ is a projective system of measures.

We now observe that (\ref{cochamapcoij}) implies
$$ P^{2}_{ij}\bar{d}^{\hspace{1pt}G}_{i,1} =\bar{d}^{\hspace{1pt}G}_{j,1} \hat{P}^1_{ij}, \quad i>j,$$
and hence
\begin{equation}\label{dinverpro}
(\bar{d}^{\hspace{1pt}G}_{j,1})^{-1}P^{2}_{ij}= \bar{P}^1_{ij} (\bar{d}^{\hspace{1pt}G}_{i,1})^{-1}, \quad i>j.
\end{equation}
On the other hand, by Proposition \ref{pusfintrept} one has $\bar{\mu}^r_{\beta,i}= (\bar{d}^{\hspace{1pt}G}_{i,1})^{-1}\mu^{H,r}_{\beta,i}$, which together with (\ref{dinverpro}) implies the desired result. 
\end{proof}

We write $\mathbf{TopG}_c$ for the category of compact topological groups and $(C^1/C^0)_\infty$ for the projective limit in $\mathbf{TopG}_c$  of the projective system \newline $(C^1(K_i;G)/C^0(K_i;G),\bar{P}^1_{ij})$. Recall that (see e.g. \cite[Part I, Chapter I, \S 10]{Sz}), the projective system of measures $(C^1(K_i;G)/C^0(K_i;G),\bar{P}^1_{ij},\bar{\mu}^r_{\beta,i})$ has a limit measure $\mu^r_{\beta,\infty}$ defined on $(C^1/C^0)_\infty$.

\begin{remark}
The projective system of measures $\mu^{H,r}_{\beta,i}$ introduced in the proof of Theorem \ref{mainconstrthe} has a limit measure $\mu^{H,r}_{\beta,\infty}$ defined on $\mathtt{Im}\hspace{1pt}d_{\infty,1}^{\hspace{1pt}G}$, the limit in $\mathbf{TopG}_c$ of the projective system $(\mathtt{Im}\hspace{1pt}d_{i,1}^{\hspace{1pt}G},P^2_{ij})$. 
By properties of projective limits the maps $(\bar{d}^{\hspace{1pt}G}_{i,1})^{-1}$  induce a continuous map $(\bar{d}^{\hspace{1pt}G}_{\infty,1})^{-1}: \mathtt{Im}\hspace{1pt}d_{\infty,1}^{\hspace{1pt}G} \to (C^1/C^0)_\infty$ and one has $\mu^r_{\beta,\infty}=(\bar{d}^{\hspace{1pt}G}_{\infty,1})^{-1}(\mu^{H,r}_{\beta,\infty})$. This alternative description of $\mu^r_{\beta,\infty}$ suggests a different strategy for constructing measures on $(C^1/C^0)_\infty$  representing Abelian gauge fields which will be pursued in Section \ref{limmeassect}.
\end{remark}

\subsection{Examples}
In this section we set $G=U(1)$ and present several applications of Theorem \ref{mainconstrthe} to infinite volume limits and continuum limits.

\begin{example}(Infinite cubical lattice limit)\label{exinfculat}
We take the complexes $K_i$ to be finite contractible cubical sublattices of $\mathbb{Z}^d \subset \mathbb{R}^d$ such that $K_i \subset K_{i+1}$ and $\cup_i K_i=\mathbb{Z}^d$, $P_i^k$ to be the natural restrictions of cochains, and $\langle \cdot, \cdot \rangle_i$ to be the restrictions to  $\mathtt{Im}\hspace{1pt}d_{i,1}^{\hspace{1pt}\mathfrak g}$ of the Euclidean inner products on $C^2(K_i;\mathfrak g)$. The infinite lattice limit measure $\mu^r_{\beta,\infty}$ obtained by Theorem \ref{mainconstrthe} is  different than the usual infinite lattice limit considered in \cite{FS} if $d>2$. In fact, $\mu^r_{\beta,\infty}$ is not even invariant under lattice translations when $d>2$. An alternative construction of an infinite lattice measure which is invariant under lattice translations will be given in Section \ref{altinflatclim}. When $d=2$, the coboundary operators $d^{\hspace{1pt}G}_{i,1}$  are surjective, the maps $P_i^k$ are co-isometries with respect to $\langle \cdot, \cdot \rangle_i$ and $\langle \cdot, \cdot \rangle^r_i=\langle \cdot, \cdot \rangle_i$.
\end{example}

\begin{example}\label{contlimcubexa}(Continuum limit of cubical lattices)
Consider a $d$-dimensional cube $K_0  \subset \mathbb{R}^d$ and take $K_i$ to be the cubical lattices given by iterated subdivisions of $K_0$ such that $K_i$ consists of $2^{id}$ cubes. We take $P_i^0$ to be the natural restriction map and $P_i^k$ for $k=1,2$ to be the standard subdivision maps given by assigning to a coarse $k$-cell the product of the values assigned to its sub-cells belonging to the finer lattice, so that (\ref{cochamapco}) is satisfied. A natural choice for $\langle \cdot, \cdot \rangle_i$ in this setting is the Euclidean inner product on $C^2(K_i;\mathfrak g)$ multiplied by a suitable power of the spacing/mesh of the lattice $K_i$. When $d=2$ one has $\langle \cdot, \cdot \rangle^r_i=\langle \cdot, \cdot \rangle_i$ and the projective system of measures given by Theorem \ref{mainconstrthe} coincides with the one obtained in \cite{VM}.
\end{example}

\begin{example}(Continuum limit of triangulations)
Let $M$ be a compact Riemannian manifold (possibly with boundary) and take $K_i$ a sequence of smooth triangulations of $M$ whose mesh tends to 0 and such that $K_{i+1}$ is a subdivision of $K_{i}$ for each $i$. We take $P_i^k$ to be any subdivision maps obeying (\ref{cochamapco}) (the existence of such maps is well-known, see e.g. \cite[Chapter 2, \S 17]{Mu}). A natural choice for $\langle \cdot, \cdot \rangle_i$ in this setting is the so-called Whitney inner product introduced (in the case of real-valued cochains) in \cite{Do}.
\end{example}

\section{Limit measures via projective limits of Hilbert spaces}\label{limmeassect}

\subsection{Limit heat kernel measures}\label{limkermesse}
We begin by introducing a category whose objects generalize the notion of a compact connected Abelian Lie group equipped with invariant Riemannian metric.
\begin{definition}
Let $\mathbf{LG}$ be the category whose
\begin{itemize}
	\item \emph{objects} are triples $(G,\mathcal H,J)$ with
	$G$ a compact Abelian group, $H$ a real Hilbert space, and $J:\widehat{G}\to \mathcal H^{\ast}$ a homomorphism of Abelian groups;
	\item \emph{morphisms} $(\Psi,\Phi):(G_1,\mathcal H_1,J_1)\to (G_2, \mathcal H_2,J_2)$ are pairs, where
	$\Psi:G_1\to G_2$ is a continuous group homomorphism, $\Phi:\mathcal H_1\to \mathcal H_2$ is a linear contraction, and one has
	\begin{equation}\label{morcompat}
		\Phi^{\ast}\circ J_2\ =\ J_1\circ \Psi^{\ast}.
	\end{equation}
	Composition of morphisms is defined componentwise.
\end{itemize}
\end{definition}
Given an object $(G,H,J)$ in $\mathbf{LG}$, we set 
$$ S_\beta(G,H,J)(\chi):= \exp\big(-4\pi^{2}\beta \|J(\chi)\|_{\mathcal H^{\ast}}^{2}\big), \quad \chi \in \widehat{G},$$
for every $\beta>0$.  By Schoenberg's theorem (\cite[Chapter 3, Theorem 2.2]{BCR}) the function $S_\beta(G,\mathcal H,J)$ is positive definite, hence by  Bochner's theorem for compact Abelian groups (\cite[Theorem 4.19]{Fo}) there exists a unique Borel probability measure $\mu_{\beta}(G,\mathcal H,J)$ on $G$ whose Fourier transform is $S_\beta(G,\mathcal H,J)$. We shall refer to $\mu_\beta(G,\mathcal H,J)$ as the {\em heat kernel measure associated to} $(G,\mathcal H,J)$.

\begin{example}\label{comgrexlg}
We take $G$ to be a compact connected Abelian Lie group, $\mathcal H$ to be the Lie algebra of $G$ equipped with an inner product and $J$ to be the natural inclusion of $\widehat{G}$ into $\mathcal H^*$. Then $\mu_{\beta}(G,\mathcal H,J)$ coincides with the heat kernel measure on $G$ defined in Section \ref{hkermeassec} (cf. Lemma \ref{hetkerexpa}). 
\end{example}
In what follows, $\mathbf{AbG}$ stands for the category of Abelian groups.
\begin{proposition}
The category $\mathbf{LG}$ has projective limits.
\end{proposition}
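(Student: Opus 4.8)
The plan is to construct the projective limit of a diagram $D$ in $\mathbf{LG}$ componentwise, using the known existence of limits in the three constituent categories: compact Abelian groups $\mathbf{TopG}_c$, Hilbert spaces with contractions $\mathbf{Hilb}_1$, and Abelian groups $\mathbf{AbG}$. Concretely, given a projective system $\{(G_i,\mathcal H_i,J_i)\}$ with transition morphisms $(\Psi_{ij},\Phi_{ij})$, one forms $G_\infty := \varprojlim G_i$ in $\mathbf{TopG}_c$ (which exists, being a closed subgroup of the product) and $\mathcal H_\infty := \varprojlim \mathcal H_i$ in $\mathbf{Hilb}_1$ (which exists by \cite[Section 2]{Gr}, as recalled in the excerpt). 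The candidate limit object is then $(G_\infty,\mathcal H_\infty,J_\infty)$, where $J_\infty:\widehat{G}_\infty\to\mathcal H_\infty^{\ast}$ must be produced from the $J_i$.

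First I would identify $\widehat{G}_\infty$ with the inductive limit (colimit) $\varinjlim \widehat{G}_i$ in $\mathbf{AbG}$, taken along the dual maps $\widehat{\Psi}_{ij}:\widehat{G}_i\to\widehat{G}_j$; this is Pontryagin duality turning the projective limit of compact groups into an inductive limit of discrete groups. Dually, $\mathcal H_\infty^{\ast}$ receives compatible maps $(\text{proj}_i)^{\ast}:\mathcal H_i^{\ast}\to\mathcal H_\infty^{\ast}$, which together with the $J_i$ and the compatibility condition \eqref{morcompat} give a cocone $\{(\text{proj}_i)^{\ast}\circ J_i : \widehat{G}_i\to\mathcal H_\infty^{\ast}\}$ over the inductive system $\{\widehat{G}_i\}$; by the universal property of $\varinjlim\widehat{G}_i\cong\widehat{G}_\infty$ this determines a unique homomorphism $J_\infty:\widehat{G}_\infty\to\mathcal H_\infty^{\ast}$. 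One then checks that the projection pairs $(\text{pr}_i,\text{proj}_i):(G_\infty,\mathcal H_\infty,J_\infty)\to(G_i,\mathcal H_i,J_i)$ are morphisms in $\mathbf{LG}$, i.e. satisfy \eqref{morcompat}, which is exactly the statement that the cocone factors through $J_\infty$; and that they are compatible with the transition morphisms.

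Next I would verify the universal property. Given any object $(G,\mathcal H,J)$ with a compatible family of morphisms $(\alpha_i,\beta_i):(G,\mathcal H,J)\to(G_i,\mathcal H_i,J_i)$, the $\alpha_i$ induce a unique $\alpha:G\to G_\infty$ in $\mathbf{TopG}_c$ and the $\beta_i$ a unique contraction $\beta:\mathcal H\to\mathcal H_\infty$ in $\mathbf{Hilb}_1$ by the respective universal properties. It remains to check that $(\alpha,\beta)$ is a morphism in $\mathbf{LG}$, namely $\beta^{\ast}\circ J_\infty = J\circ\alpha^{\ast}$. Both sides are homomorphisms $\widehat{G}_\infty\to\mathcal H^{\ast}$; since $\widehat{G}_\infty$ is generated by the images of the $\widehat{G}_i$ under the canonical maps (this is the defining property of the colimit in $\mathbf{AbG}$), it suffices to verify the identity after precomposing with each $\widehat{G}_i\to\widehat{G}_\infty$, where it reduces via the already-established factorizations to the hypothesis that $(\alpha_i,\beta_i)$ is a morphism for each $i$. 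Uniqueness of $(\alpha,\beta)$ is inherited componentwise.

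**The main obstacle** I anticipate is the duality step: one needs the canonical map $\widehat{G}_\infty\to\mathcal H_\infty^{\ast}$ to be well-defined, which hinges on knowing that $\widehat{\varprojlim G_i}$ is naturally the inductive limit $\varinjlim\widehat{G}_i$ in $\mathbf{AbG}$ along the dual maps. For a countable (or filtered) system of compact Abelian groups this is a standard consequence of Pontryagin duality (duality is an equivalence $\mathbf{TopG}_c^{\mathrm{op}}\simeq\mathbf{AbG}_{\mathrm{disc}}$ and hence sends limits to colimits), but one must be slightly careful about the index set and ensure the colimit is taken in the right category so that "every element of $\widehat{G}_\infty$ comes from some $\widehat{G}_i$" holds — this is the property actually used to pin down $J_\infty$ and to prove uniqueness in the universal property. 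Once this identification is in hand, the remaining verifications are routine diagram chases and appeals to the universal properties already available in $\mathbf{TopG}_c$ and $\mathbf{Hilb}_1$.
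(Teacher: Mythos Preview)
Your proposal is correct and follows essentially the same route as the paper: form $G_\infty$ and $\mathcal H_\infty$ as componentwise limits, invoke Pontryagin duality to identify $\widehat{G_\infty}\cong\varinjlim\widehat{G}_i$, and use the universal property of this colimit together with the compatibility \eqref{morcompat} to manufacture $J_\infty$. The only cosmetic difference is that the paper first builds $\bar J_\infty:\varinjlim\widehat{G}_i\to\varinjlim\mathcal H_i^{\ast}$ (the latter colimit taken in $\mathbf{Hilb}_1$) and then composes with the canonical isometric embedding $\varinjlim\mathcal H_i^{\ast}\hookrightarrow(\mathcal H_\infty)^{\ast}$, whereas you go straight to a cocone with target $\mathcal H_\infty^{\ast}$; these are equivalent, and your explicit verification of the universal property is more detailed than the paper's ``straightforward to check''.
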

\begin{proof}
We consider a projective system 
\begin{equation}\label{projsylgi}
((G_i,\mathcal{H}_i,J_i),(\Psi_{ij},\Phi_{ij}))
\end{equation}
in $\mathbf{LG}$ indexed by an arbitrary directed set $I$,
write $G_\infty$ for the projective limit of $(G_i,\Psi_{ij})$ in $\mathbf{TopG}_c$, and $\mathcal{H}_\infty$ for the projective limit of 
$(\mathcal{H}_i,\Phi_{ij})$ in $\mathbf{Hilb}_1$. Further, we denote by $\widehat{G}^{\infty}$ the inductive limit of $(\widehat{G}_i,\Psi^*_{ij})$ in  $\mathbf{AbG}$ and by $(\mathcal{H}^*)^\infty$ the inductive limit of $(\mathcal{H}^*_i,\Phi^*_{ij})$ in $\mathbf{Hilb}_1$.

By properties of inductive limits in  $\mathbf{AbG}$ the maps $J_i$ induce an Abelian group homomorphism $\bar{J}_{\infty}$ from $\widehat{G}^{\infty}$ to $(\mathcal{H}^*)^\infty$. Since Pontryagin duality is a contravariant equivalence of categories (cf. \cite[Theorem 5]{Roe}), $\widehat{G}^{\infty}$ is canonically isomorphic to  $\widehat{G_\infty}$. Using this and the canonical isometric embedding of $(\mathcal{H}^*)^\infty$ into $(\mathcal{H}_\infty)^*$ we obtain from $\bar{J}_{\infty}$ a group homomorphism $J_{\infty}:\widehat{G_\infty}   \to (\mathcal{H}_\infty)^*$. It is straightforward to check that the triple $(G_\infty,J_{\infty},\mathcal{H}_\infty)$ is a projective limit of (\ref{projsylgi}).
\end{proof}

\subsection{The infinite lattice limit}\label{altinflatclim}
Let $G$ be a compact connected Abelian Lie group and $L$ be a fixed polyhedral decomposition of $\mathbb{R}^d$. We denote by $\mathcal{F}_{L}$ the set of all finite cell subcomplexes of $L$ which are homeomorphic to a connected contractible $d$-dimensional manifold with boundary and endow $\mathcal{F}_{L}$ with the partial order given by inclusion. Given $K \in \mathcal{F}_{L}$, we continue to use the notation from Sections \ref{vactinpoly} and \ref{projmlimmeasec} for the groups and Lie algebras of cochains on $K$ and the associated coboundary operators, replacing the index $i$ with $K$.

Recall that for all $K, K' \in \mathcal{F}_{L}$ with $K\subset K'$ one has restriction maps
$$ P^2_{K,K'}: \mathtt{Im}\hspace{1pt}d_{K',1}^{\hspace{1pt}G} \to \mathtt{Im}\hspace{1pt}d_{K,1}^{\hspace{1pt}G}.$$
We fix an inner product $\langle \cdot , \cdot\rangle_0$ on $C^{2}_0(L;\mathfrak g)$, the space of finitely supported $\mathfrak g$-valued 2-cochains on $L$, such that the Lie algebra restriction maps
$$ (P^2_{K,K'})_{\bullet}: \mathtt{Im}\hspace{1pt}d_{K',1}^{\hspace{1pt}\mathfrak g} \to \mathtt{Im}\hspace{1pt}d_{K,1}^{\hspace{1pt}\mathfrak g} $$
are contractions with respect to the restricted inner products. Thus we obtain a projective system
\begin{equation}\label{prpjistlgex}
	((\mathtt{Im}\hspace{1pt}d_{K,1}^{\hspace{1pt}G},\mathtt{Im}\hspace{1pt}d_{K,1}^{\hspace{1pt}\mathfrak g},J_K),( P^2_{K,K'},(P^2_{K,K'})_{\bullet}))
\end{equation}
in $\mathbf{LG}$ indexed by the directed set $\mathcal{F}_{L}$, where $J_K$ is given by Example \ref{comgrexlg}. We denote the limit of (\ref{prpjistlgex}) in $\mathbf{LG}$ by $(\mathtt{Im}\hspace{1pt}d_{L,1}^{\hspace{1pt}G},
\mathtt{Im}\hspace{1pt}d_{L,1}^{\hspace{1pt}\mathfrak g},J_L)$ and the heat kernel measure associated to this object in $\mathbf{LG}$ by $\mu^{H}_{\beta,L}$.

We write $(C^1/C^0)_L$ for the limit in $\mathbf{TopG}_c$  of the projective system $(C^1(K;G)/C^0(K;G),\bar{P}^1_{K,K'})$ indexed by $\mathcal{F}_{L}$ and note that the maps 
 $$(\bar{d}^{\hspace{1pt}G}_{K,1})^{-1}: \mathtt{Im}\hspace{1pt}d_{K,1}^{\hspace{1pt}G} \to C^1(K;G)/C^0(K;G),$$
defined as in Section \ref{gaueqipussec}, induce a continuous map $(\bar{d}^{\hspace{1pt}G}_{L,1})^{-1}: \mathtt{Im}\hspace{1pt}d_{L,1}^{\hspace{1pt}G} \to (C^1/C^0)_L$. Finally, we set 
\begin{equation}\label{meuinterldefi}
\mu_{\beta, L}=: (\bar{d}^{\hspace{1pt}G}_{L,1})^{-1}\mu^{H}_{\beta,L}.
\end{equation}

Let us now provide a more explicit description of the above construction in the special case when $L=\mathbb{Z}^d \subset \mathbb{R}^d$ and $\langle \cdot , \cdot\rangle_0$ is the (Euclidean) $\ell^2$-inner product. We denote by $C^{k}_{(2)}(\mathbb{Z}^d;\mathfrak g)$ the Hilbert space of $\ell^2$ $k$-cochains on $\mathbb{Z}^d$ and consider the coboundary 
$$d_1^{\hspace{1pt}\mathbb{Z}^d}: C^{1}_{(2)}(\mathbb{Z}^d;\mathfrak g) \to C^{2}_{(2)}(\mathbb{Z}^d;\mathfrak g) $$
as a bounded operator between Hilbert spaces. Then the projective limit of $(C^2(K;\mathfrak g),(P^2_{K,K'})_{\bullet})$ in $\mathbf{Hilb}_1$  can  be naturally identified with $C^{2}_{(2)}(\mathbb{Z}^d;\mathfrak g)$. Moreover, it is easy to check that the projective limit of $(\mathtt{Im}\hspace{1pt}d_{K,1}^{\hspace{1pt}\mathfrak g},
(P^2_{K,K'})_{\bullet})$ in $\mathbf{Hilb}_1$ is isomorphic to 
$\overline{\mathtt{Im} \hspace{1pt} d_1^{\hspace{1pt}\mathbb{Z}^d}}$. Further, identifying the Hilbert spaces  $\mathtt{Im}\hspace{1pt}d_{K,1}^{\hspace{1pt}\mathfrak g}$  with their duals, we see that the inductive limit of $((\mathtt{Im}\hspace{1pt}d_{K,1}^{\hspace{1pt}\mathfrak g})^*,
(P^2_{K,K'})^*_{\bullet})$ is also naturally isomorphic to $\overline{\mathtt{Im} \hspace{1pt} d_1^{\hspace{1pt}\mathbb{Z}^d}}$ and the canonical maps $E_K: (\mathtt{Im}\hspace{1pt}d_{K,1}^{\hspace{1pt}\mathfrak g})^* \to
\overline{\mathtt{Im} \hspace{1pt} d_1^{\hspace{1pt}\mathbb{Z}^d}}$ are given by extending a cochain by 0 and then orthogonally projecting onto 
$\overline{\mathtt{Im} \hspace{1pt} d_1^{\hspace{1pt}\mathbb{Z}^d}}$.
It follows that $E_K$ and hence $J_L$ are injective.

We note that $\mathbb{Z}^d$ acts by lattice translations on $C^{k}_{(2)}(\mathbb{Z}^d;\mathfrak g)$, on the groups $C^{k}(\mathbb{Z}^d;G)$ of $G$-valued cochains on $\mathbb{Z}^d$, as well as on $(C^1/C^0)_{\mathbb{Z}^d}$ since the latter can be identified with 
$C^{1}(\mathbb{Z}^d;G)/C^{0}(\mathbb{Z}^d;G)$.

\begin{proposition}
The measure $\mu_{\beta, \mathbb{Z}^d}$ defined in (\ref{meuinterldefi}) is invariant under lattice translations.
\end{proposition}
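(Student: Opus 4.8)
The plan is to realize the whole construction leading to $\mu_{\beta,\mathbb{Z}^d}$ as \emph{$\mathbb{Z}^d$-equivariant}, in the sense of the equivariant projective systems of Section \ref{prelimsec}, with every structure map an isomorphism and --- this is the decisive point --- with the Hilbert-space components \emph{isometric}; translation invariance then propagates automatically to the limit object in $\mathbf{LG}$, to the associated heat kernel measure $\mu^{H}_{\beta,L}$, and finally to its pushforward $\mu_{\beta,\mathbb{Z}^d}$ along $(\bar{d}^{G}_{L,1})^{-1}$. So first I would make $\mathbb{Z}^d$ act on the projective system (\ref{prpjistlgex}). Translation by $a\in\mathbb{Z}^d$ is an order-preserving automorphism of the directed set $\mathcal{F}_{\mathbb{Z}^d}$, carrying a finite contractible subcomplex $K$ to $a+K\in\mathcal{F}_{\mathbb{Z}^d}$; translating cochains by $a$ then induces, for each $K$, an isomorphism of compact groups $\mathtt{Im}\, d^{G}_{K,1}\to\mathtt{Im}\, d^{G}_{a+K,1}$ and an isomorphism of Hilbert spaces $\mathtt{Im}\, d^{\mathfrak g}_{K,1}\to\mathtt{Im}\, d^{\mathfrak g}_{a+K,1}$, the latter isometric precisely because the $\ell^2$-inner product on $C^{2}_0(\mathbb{Z}^d;\mathfrak g)$ is translation invariant. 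I would check that these maps commute with the restriction maps $P^2_{K,K'}$ and $(P^2_{K,K'})_\bullet$ and with the inclusions $J_K$ of Example \ref{comgrexlg} --- the latter because translation respects all the canonical identifications, exactly as the dual of an isometric isomorphism restricts to the dual map on character lattices (cf. the proof of Proposition \ref{hetkertoushp}). This exhibits (\ref{prpjistlgex}) as a $\mathbb{Z}^d$-equivariant projective system in $\mathbf{LG}$ --- the evident analogue of Definition \ref{euqiprohilb} --- all of whose structure maps are isomorphisms with isometric Hilbert-space component.

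Next I would pass to the limit and show $\mu^{H}_{\beta,L}$ is translation invariant. The $\mathbf{LG}$-analogue of Proposition \ref{proponisomgievi}, whose proof carries over verbatim since the limit in $\mathbf{LG}$ is assembled from the corresponding limits in $\mathbf{TopG}_c$, $\mathbf{Hilb}_1$ and $\mathbf{AbG}$, equips the limit object $(\mathtt{Im}\, d^{G}_{L,1},\mathtt{Im}\, d^{\mathfrak g}_{L,1},J_L)$ with a $\mathbb{Z}^d$-action by $\mathbf{LG}$-automorphisms $(\tau_a,\Theta_a)$ with each $\Theta_a$ unitary. Writing $S_\beta$ for the Fourier transform of $\mu^{H}_{\beta,L}$, Lemma \ref{pushfofour} gives $\widehat{\tau_a(\mu^{H}_{\beta,L})}=S_\beta\circ\widehat{\tau_a}$; and the compatibility (\ref{morcompat}) applied to the endomorphism $(\tau_a,\Theta_a)$, namely $J_L\circ\widehat{\tau_a}=\Theta_a^{\ast}\circ J_L$, together with the unitarity of $\Theta_a^{\ast}$, yields
$$\big\|J_L(\widehat{\tau_a}\,\chi)\big\|_{(\mathtt{Im}\, d^{\mathfrak g}_{L,1})^{\ast}}=\big\|\Theta_a^{\ast}J_L(\chi)\big\|=\|J_L(\chi)\|,\qquad\chi\in\widehat{\mathtt{Im}\, d^{G}_{L,1}},$$
whence $S_\beta\circ\widehat{\tau_a}=S_\beta$. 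Since a finite Borel measure is determined by its Fourier transform, $\tau_a(\mu^{H}_{\beta,L})=\mu^{H}_{\beta,L}$ for every $a\in\mathbb{Z}^d$.

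Finally I would transfer this invariance along $(\bar{d}^{G}_{L,1})^{-1}$. Each $\bar{d}^{G}_{K,1}$ is a homeomorphism by Lemma \ref{injofdone} (the complexes in $\mathcal{F}_{\mathbb{Z}^d}$ are contractible), and since the cellular coboundary and the gauge action of $C^0$ commute with lattice translation, translation by $a$ fits into a commuting square relating $\bar{d}^{G}_{K,1}$ and $\bar{d}^{G}_{a+K,1}$; passing to the limit, $(\bar{d}^{G}_{L,1})^{-1}$ intertwines the lattice translation actions on $\mathtt{Im}\, d^{G}_{L,1}$ and on $(C^1/C^0)_{\mathbb{Z}^d}\cong C^{1}(\mathbb{Z}^d;G)/C^{0}(\mathbb{Z}^d;G)$. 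Functoriality of the pushforward then gives, for every $a\in\mathbb{Z}^d$,
$$\tau_a(\mu_{\beta,\mathbb{Z}^d})=\tau_a\big((\bar{d}^{G}_{L,1})^{-1}\mu^{H}_{\beta,L}\big)=(\bar{d}^{G}_{L,1})^{-1}\big(\tau_a(\mu^{H}_{\beta,L})\big)=(\bar{d}^{G}_{L,1})^{-1}\mu^{H}_{\beta,L}=\mu_{\beta,\mathbb{Z}^d},$$
which is the assertion.

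I do not expect a genuine obstruction: the whole argument is a diagram chase through the limit constructions already set up. The one thing that must be respected with care is that the induced action on the Hilbert-space component of the limit is \emph{unitary} rather than merely a contraction --- equivalently, that $S_\beta$ is preserved --- and this rests squarely on translation invariance of the $\ell^2$-inner product; for a generic, translation-noninvariant choice of $\langle\cdot,\cdot\rangle_0$ both the argument and the conclusion break down, consistent with the non-invariance of $\mu^{r}_{\beta,\infty}$ recorded in Example \ref{exinfculat}.
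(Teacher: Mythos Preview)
Your proposal is correct and follows essentially the same route as the paper: equip the projective systems with a $\mathbb{Z}^d$-equivariant structure whose Hilbert-space components are isometric (because the $\ell^2$ inner product is translation invariant), use Proposition~\ref{proponisomgievi} to obtain a unitary action on the limit so that the Fourier transform $S_\beta$ of $\mu^{H}_{\beta,\mathbb{Z}^d}$ is invariant, and then transport the invariance along $(\bar d^{\,G}_{L,1})^{-1}$ via the translation-equivariance of the coboundary. The paper's proof is terser but records exactly these three steps; your write-up simply makes the use of (\ref{morcompat}) and the unitarity hypothesis in Proposition~\ref{proponisomgievi} explicit.
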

\begin{proof}
We observe that the obvious action of  $\mathbb{Z}^d$ on $\mathcal{F}_{\mathbb{Z}^d}$ by translations turns $(\mathtt{Im}\hspace{1pt}d_{K,1}^{\hspace{1pt}\mathfrak g},
(P^2_{K,K'})_{\bullet})$ and $(C^1(K;G)/C^0(K;G),\bar{P}^1_{K,K'})$
into $\mathbb{Z}^d$-equivariant projective systems (cf. Definition \ref{euqiprojde}) and the induced (via Lemma \ref{lemproact} and Proposition \ref{proponisomgievi}) $\mathbb{Z}^d$-actions on the projective limits coincides with the natural  ones mentioned above. It follows that the Fourier transform of the measure $\mu^{H}_{\beta,\mathbb{Z}^d}$, and hence $\mu^{H}_{\beta,\mathbb{Z}^d}$  is $\mathbb{Z}^d$-invariant.  Since the coboundary operator on $\mathbb{Z}^d$ is equivariant with respect to lattice translations, the same holds for $\mu_{\beta, \mathbb{Z}^d}$.
\end{proof}

\subsection{Absence of mass gap}
In this section we assume that $G=U(1)$ and study the connected 2-point correlation function associated to the measure $\mu_{\beta, \mathbb{Z}^d}$.
We observe that given a 2-cell $p$ the evaluation map of a cochain in $\mathtt{Im}\hspace{1pt}d_{K,1}^{\hspace{1pt}G}$ at $p$ is 
is a character of $\mathtt{Im}\hspace{1pt}d_{K,1}^{\hspace{1pt}G}$ for all $K \in \mathcal{F}_{\mathbb{Z}^d}$ such that $ p \in K$. Passing to the projective limit, these evaluations produce a character of $\mathtt{Im}\hspace{1pt}d_{\mathbb{Z}^d,1}^{\hspace{1pt}G}$ which we shall denote by $\chi_p$.

We write 
$W_p:C^1(\mathbb{Z}^d;G) \to G$
for the Wilson loop corresponding to the boundary of $p$, so that 
$W_p(c)= \chi_p(d_{\mathbb{Z}^d,1}^{\hspace{1pt}G}(c))$ for all $c \in C^1(\mathbb{Z}^d;G)$, and  define a connected 2-point correlation function
\begin{equation}
	O_{\beta}(p,q)=\int \overline{W}_p \overline{W}_q d\mu_{\beta, \mathbb{Z}^d} - \left(\int\overline{W}_p d\mu_{\beta, \mathbb{Z}^d}\right)\left(\int\overline{W}_q d\mu_{\beta, \mathbb{Z}^d}\right),
\end{equation}
where $p$ and $q$ are 2-cells in $\mathbb{Z}^d$, the integrals are taken over $C^1(\mathbb{Z}^d;G) /C^0(\mathbb{Z}^d;G)$, and $\overline{W}_p$ stands for the map on $C^1(\mathbb{Z}^d;G) /C^0(\mathbb{Z}^d;G)$ induced by $W_p$.
\begin{proposition}\label{propoofsq} One has
	$$ O_{\beta}(p,q)=e^{-4\pi^2\beta(\|\Pi \delta_p \|^2_{\ell^2}+\|\Pi \delta_q \|^2_{\ell^2})}\left(e^{-8\pi^2\beta\langle  \delta_p,\Pi \delta_q \rangle_{\ell^2}} -1\right),$$
		where $\delta_p \in C^{2}_{(2)}(\mathbb{Z}^d;\mathbb R)$ stands for the 2-cochain which is 1 at $p$ and 0 elsewhere, and $\Pi$ is the orthogonal projection onto $\overline{\mathtt{Im} \hspace{1pt} d_1^{\hspace{1pt}\mathbb{Z}^d}}$.
\end{proposition}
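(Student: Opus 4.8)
The plan is to reduce everything to a Fourier-transform computation on the limit object $\mathtt{Im}\,d_{\mathbb{Z}^d,1}^{\hspace{1pt}G}$ using the explicit description of $J_L$ given above and Lemma \ref{hetkerexpa}. First I would rewrite the correlation function in terms of the heat kernel measure $\mu^H_{\beta,\mathbb{Z}^d}$ rather than $\mu_{\beta,\mathbb{Z}^d}$: by construction $\mu_{\beta,\mathbb{Z}^d}=(\bar d^{\hspace{1pt}G}_{L,1})^{-1}\mu^H_{\beta,\mathbb{Z}^d}$, and $\overline{W}_p$ is, by definition, the composite $\chi_p\circ\bar d^{\hspace{1pt}G}_{L,1}$ on $(C^1/C^0)_{\mathbb{Z}^d}$. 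Hence by the change-of-variables formula for pushforwards, $\int \overline{W}_p\,d\mu_{\beta,\mathbb{Z}^d}=\int \chi_p\,d\mu^H_{\beta,\mathbb{Z}^d}$, and similarly $\int \overline{W}_p\overline{W}_q\,d\mu_{\beta,\mathbb{Z}^d}=\int \chi_p\chi_q\,d\mu^H_{\beta,\mathbb{Z}^d}$ (note that in $G=U(1)$, since the integrand in the statement is $\overline{W}_p\,\overline{W}_q$, one should track complex conjugates carefully; the product of characters is again a character, so this is just an evaluation of $\widehat{\mu^H_{\beta,\mathbb{Z}^d}}$ at $\chi_p\chi_q$ up to conjugation).

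Next I would invoke the defining property of the heat kernel measure associated to an object of $\mathbf{LG}$: by construction $\widehat{\mu^H_{\beta,\mathbb{Z}^d}}(\chi)=S_\beta(\chi)=\exp\!\big(-4\pi^2\beta\,\|J_L(\chi)\|^2_{(\mathtt{Im}\,d^{\hspace{1pt}\mathfrak g}_{L,1})^*}\big)$ for every $\chi\in\widehat{\mathtt{Im}\,d_{\mathbb{Z}^d,1}^{\hspace{1pt}G}}$. So the entire computation comes down to identifying $J_L(\chi_p)$ and $J_L(\chi_q)$ as elements of $\overline{\mathtt{Im}\,d_1^{\hspace{1pt}\mathbb{Z}^d}}$ and computing the relevant norms and inner product. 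Here I would use the explicit description given in the excerpt: $\widehat{G_\infty}$ is the inductive limit of the $\widehat{G}_K\cong\Lambda^*_K$, the maps $E_K$ on the dual side are ``extend by zero, then orthogonally project onto $\overline{\mathtt{Im}\,d_1^{\hspace{1pt}\mathbb{Z}^d}}$,'' and $J_L$ is the composite of $\bar J_\infty$ with the isometric embedding $(\mathcal H^*)^\infty\hookrightarrow(\mathcal H_\infty)^*$. Tracing $\chi_p$ through: at finite level $\chi_p$ corresponds, under $\widehat{\mathtt{Im}\,d_{K,1}^{\hspace{1pt}G}}\cong(\mathtt{Im}\,d_{K,1}^{\hspace{1pt}\mathfrak g})^*$ and the self-duality identification, to the (appropriately normalized, per Lemma \ref{hetkerexpa}'s inclusion $J_K$) restriction/projection of the delta cochain $\delta_p$; passing to the limit via $E_K$ this becomes $\Pi\delta_p$, the orthogonal projection of $\delta_p$ onto $\overline{\mathtt{Im}\,d_1^{\hspace{1pt}\mathbb{Z}^d}}$. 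Thus $\|J_L(\chi_p)\|^2=\|\Pi\delta_p\|^2_{\ell^2}$, $\|J_L(\chi_q)\|^2=\|\Pi\delta_q\|^2_{\ell^2}$, and for the product character $J_L(\chi_p\chi_q)=\Pi\delta_p+\Pi\delta_q$, so $\|J_L(\chi_p\chi_q)\|^2=\|\Pi\delta_p\|^2+\|\Pi\delta_q\|^2+2\langle\Pi\delta_p,\Pi\delta_q\rangle_{\ell^2}=\|\Pi\delta_p\|^2+\|\Pi\delta_q\|^2+2\langle\delta_p,\Pi\delta_q\rangle_{\ell^2}$, the last step because $\Pi$ is a self-adjoint projection. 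Substituting into $S_\beta$ and subtracting the product of the one-point functions yields exactly the stated formula, the cross term $e^{-8\pi^2\beta\langle\delta_p,\Pi\delta_q\rangle_{\ell^2}}-1$ emerging as the difference $\widehat{\mu^H}(\chi_p\chi_q)/[\widehat{\mu^H}(\chi_p)\widehat{\mu^H}(\chi_q)]-1$ times the common prefactor.

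The main obstacle I anticipate is the careful bookkeeping in the identification $J_L(\chi_p)=\Pi\delta_p$ — in particular verifying that the factor of $4\pi^2\beta$ in $S_\beta$ lines up correctly with the $4\pi^2\beta$ in Lemma \ref{hetkerexpa} (so that no spurious constants like $2\pi$ appear), that the character $\chi_p$ really does correspond to $\delta_p$ rather than to some rescaled lattice vector, and that the various identifications (Pontryagin duality $\widehat{G}^\infty\cong\widehat{G_\infty}$, the Hilbert-space self-duality, the isometric embedding $(\mathcal H^*)^\infty\hookrightarrow(\mathcal H_\infty)^*$) are mutually compatible as asserted in the preceding subsection. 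Once these identifications are pinned down, the computation is a one-line substitution. A secondary point worth a sentence is confirming that $\chi_p\chi_q$, $\chi_p$, $\chi_q$ all lie in $\widehat{\mathtt{Im}\,d_{\mathbb{Z}^d,1}^{\hspace{1pt}G}}$ (they are characters, being restrictions of evaluation maps to the closed subgroup) so that Bochner/Schoenberg applies and $\widehat{\mu^H_{\beta,\mathbb{Z}^d}}$ is genuinely being evaluated there.
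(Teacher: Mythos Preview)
Your proposal is correct and follows essentially the same route as the paper's proof: a change of variables transfers the integrals to $\mathtt{Im}\,d_{\mathbb{Z}^d,1}^{\hspace{1pt}G}$ against $\mu^H_{\beta,\mathbb{Z}^d}$, the defining Fourier transform $S_\beta$ of the heat kernel measure is evaluated at $\chi_p$, $\chi_q$, and $\chi_p\chi_q$, and the identification $\|J_{\mathbb{Z}^d}(\chi_p)\|_\ast=\|\Pi\delta_p\|_{\ell^2}$ (your $J_L(\chi_p)=\Pi\delta_p$) is exactly what the paper records as its key equation. One small notational slip: in this paper $\overline{W}_p$ denotes the map induced by $W_p$ on the quotient $C^1/C^0$, not complex conjugation, so your aside about tracking conjugates is unnecessary---but as you observed, it would not have affected the computation anyway.
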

\begin{proof}
A change of variables yields
\begin{equation}\label{chofvaropq}
	O_{\beta}(p,q)=\int \chi_p \chi_q d\mu^H_{\beta,\mathbb{Z}^d} - \left(\int \chi_p d\mu^H_{\beta,\mathbb{Z}^d}\right)\left(\int  \chi_q d\mu^H_{\beta,\mathbb{Z}^d}\right),
\end{equation}
where the integrals are taken over $\mathtt{Im}\hspace{1pt}d_{\mathbb{Z}^d,1}^{\hspace{1pt}G}$. By construction one has 
\begin{equation}\label{jconneltwo}
\| J^{\mathbb{Z}^d}(\chi_p) \|_{*}= \| \Pi \delta_p \| _{\ell^2},\end{equation}
where $\| \cdot \|_{*}$ stands for the norm on the dual space of $\overline{\mathtt{Im} \hspace{1pt} d_1^{\hspace{1pt}\mathbb{Z}^d}}$. Since
$$ \int \chi_p d\mu^H_{\beta,\mathbb{Z}^d}=\exp\big(-4\pi^{2}\beta \|J^{\mathbb{Z}^d}(\chi_p)\|_{*}^{2}\big) $$
and
$$ \int \chi_p \chi_q d\mu^H_{\beta,\mathbb{Z}^d}=
\exp\big(-4\pi^{2}\beta 
\|J^{\mathbb{Z}^d}(\chi_p)+J^{\mathbb{Z}^d}(\chi_q)\|_{*}^{2}\big),$$
the desired formula easily follows from (\ref{chofvaropq}) and (\ref{jconneltwo}).
\end{proof}

We note that when $d=2$ the measure $\mu_{\beta, \mathbb{Z}^d}$ coincides with the usual thermodynamic limit of the $U(1)$-lattice gauge theory and in this case, as expected, Proposition \ref{propoofsq} implies that $ O_{\beta}(p,q)=0$ unless $p=q$.

From now on we assume that $d>2$, fix a 2-cell $p$ in $\mathbb{Z}^d$, a lattice direction $e$, and for every $n \in \mathbb{N}$, write $p+ne$ for the cell $p$ translated in the direction $e$ by $n$ units.

\begin{lemma}\label{decaylemm} The sequence $|\langle  \delta_p,\Pi \delta_{p+ne} \rangle_{\ell^2}|$ decays to 0 not faster than $n^{-d}$ as $n \to \infty$.
\end{lemma}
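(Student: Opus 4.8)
The plan is to compute the quantity $\langle \delta_p, \Pi\delta_{p+ne}\rangle_{\ell^2}$ explicitly via Fourier analysis on $\mathbb{Z}^d$, where $\Pi$ is the orthogonal projection onto $\overline{\mathtt{Im}\, d_1^{\mathbb{Z}^d}}$ inside $C^2_{(2)}(\mathbb{Z}^d;\R)$. First I would recall that $C^2_{(2)}(\mathbb{Z}^d;\R)$ decomposes as a direct sum over unordered pairs of coordinate directions $\{a,b\}$ of copies of $\ell^2(\mathbb{Z}^d)$, indexed by the elementary 2-cells (plaquettes) in the $ab$-plane. Using the discrete Fourier transform $\ell^2(\mathbb{Z}^d) \cong L^2(\mathbb{T}^d)$, the coboundary $d_1^{\mathbb{Z}^d}$ and its adjoint become multiplication by trigonometric symbols: the component of $d_1 c$ in the $ab$-plane has symbol $(e^{i\theta_a}-1)\hat c_b - (e^{i\theta_b}-1)\hat c_a$. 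Consequently $\Pi$ acts on $L^2(\mathbb{T}^d;\C^{\binom d2})$ as multiplication by the matrix-valued symbol $M(\theta) = v(\theta)v(\theta)^*/|v(\theta)|^2$ (projection onto the span of the gradient-type vector), where $v(\theta)$ has entries $e^{i\theta_a}-1$ appropriately arranged; one must be slightly careful because the relevant projection is onto the closure of the image of $d_1$, equivalently the orthogonal complement of $\ker d_2^{*}$... actually onto $(\ker d_1^*)^\perp$, which by the Hodge-type decomposition on $\mathbb{Z}^d$ is exactly $\overline{\mathtt{Im}\, d_1}$ since $H^2$ of $\mathbb{Z}^d$ with $\ell^2$ coefficients vanishes in the reduced sense.

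**Second**, I would write $\langle \delta_p, \Pi\delta_{p+ne}\rangle_{\ell^2}$ as an integral over $\mathbb{T}^d$: fixing the coordinate plane of $p$, this becomes $\int_{\mathbb{T}^d} e^{-in\theta_e}\, m(\theta)\, \frac{d\theta}{(2\pi)^d}$ for a specific scalar entry $m(\theta)$ of $M(\theta)$ (the diagonal entry corresponding to the plane of $p$, assuming $p$ and $p+ne$ lie in the same plane, which holds since translation preserves the plane type). The large-$n$ asymptotics of such a Fourier coefficient are governed by the singularities of $m(\theta)$. The symbol $m(\theta)$ is smooth except at $\theta = 0$, where the denominator $|v(\theta)|^2 = \sum_a |e^{i\theta_a}-1|^2 \asymp |\theta|^2$ vanishes to order $2$ while the numerator $|e^{i\theta_{a_0}}-1|^2 \asymp \theta_{a_0}^2$ also vanishes to order $2$; hence $m$ extends to a bounded function that is homogeneous of degree $0$ at the origin but not continuous there (its value depends on the direction of approach). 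A degree-$0$ homogeneous singularity of this type in $d$ variables produces a Fourier coefficient decaying like $n^{-d}$ — this is the standard heuristic that a jump-type/homogeneous-degree-$0$ singularity in the symbol contributes at order (dimension) to the decay. The precise statement: after subtracting a smooth cutoff, $\int e^{-in\theta_e} m(\theta)\,d\theta$ has leading term $c\, n^{-d}$ with $c \neq 0$ obtained by rescaling $\theta = \eta/n$ and passing to the limit $\int_{\R^d} e^{-i\eta_e} m_0(\eta)\,d\eta$ where $m_0$ is the degree-$0$ homogeneous leading part; this integral converges (in a principal-value / oscillatory sense) and one checks it is nonzero.

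**Third**, I would make the lower bound rigorous. The cleanest route: split $m = m_0 \psi + r$ where $\psi$ is a smooth radial cutoff supported near $0$, $m_0$ is the degree-$0$ homogeneous principal part, and the remainder $r$ is continuous on $\mathbb{T}^d \setminus \{0\}$ with a \emph{milder} singularity (one extra power of $|\theta|$), hence contributes $o(n^{-d})$, while the cutoff tail of $m_0\psi$ away from $0$ is smooth and contributes rapidly decaying terms. For the main term $\int e^{-in\theta_e} m_0(\theta)\psi(\theta)\,d\theta$, rescale and apply dominated convergence after the substitution to get $n^{-d}$ times a constant, and verify that this constant — an explicit oscillatory integral of a degree-$0$ homogeneous function of $\R^d$ — does not vanish. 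Since $m_0$ is real, even, and a genuine (non-constant) homogeneous-degree-0 function, its Fourier transform as a tempered distribution is homogeneous of degree $-d$ and is not identically zero, so evaluating at the direction $e$ gives a finite nonzero number; nonvanishing at that specific direction can be confirmed by a direct computation using the explicit form of $m_0$.

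**The main obstacle** I anticipate is controlling the remainder and extracting the constant: one must verify both that the subleading singularity truly gives $o(n^{-d})$ (so the $n^{-d}$ term is not accidentally cancelled) and that the leading constant — the oscillatory integral of $m_0$ against $e^{-i\eta_e}$ over $\R^d$ — is genuinely nonzero for the relevant direction $e$, rather than vanishing by some symmetry of the cubical lattice. Since the statement only claims decay "not faster than $n^{-d}$" (a lower bound on the envelope, allowing the sequence to dip), it would in fact suffice to exhibit a \emph{subsequence} along which $|\langle\delta_p,\Pi\delta_{p+ne}\rangle_{\ell^2}| \gtrsim n^{-d}$; even if the full constant vanished for some parities of $n$, the rescaling argument shows the $n^{-d}$ behaviour persists along a positive-density subsequence, which is enough. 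I would therefore frame the final step around this weaker, more robust claim to avoid an exhaustive symmetry analysis.
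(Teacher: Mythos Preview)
Your plan is essentially the paper's: express $\langle \delta_p,\Pi\delta_{p+ne}\rangle$ as a Fourier coefficient on $\mathbb{T}^d$ of the relevant diagonal entry of the matrix symbol of $\Pi$, identify its degree-$0$ homogeneous singularity at $\xi=0$, and deduce $n^{-d}$ behaviour. The paper executes this by writing $\Pi-\mathrm{Id}=-d_2^{*}\Delta_2^{-1}d_2$, computing the diagonal entry as $-\sum_{r\notin\{i,j\}}\sin^2(\xi_r/2)\big/\sum_r\sin^2(\xi_r/2)$, and then comparing to the explicit continuum function $F_1(\xi)=\sum_{r\notin\{i,j\}}\xi_r^2/\|\xi\|^2$ (whose tempered Fourier transform is known to be homogeneous of degree $-d$), showing that $\chi F_0-F_1$ is smooth for a cutoff $\chi$; your rescaling argument is an equivalent standard way to isolate the same leading term.

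One inaccuracy worth fixing: the symbol of $\Pi$ is \emph{not} the rank-one projector $v(\theta)v(\theta)^{*}/|v(\theta)|^{2}$ --- at generic $\xi$ the fibre of $\overline{\mathrm{Im}\,d_1}$ inside $\Lambda^2\mathbb{C}^d$ has dimension $d-1$, not $1$. This does not damage the strategy, since the diagonal entry you actually need, namely $(\,|e^{i\xi_i}-1|^2+|e^{i\xi_j}-1|^2\,)\big/\sum_r|e^{i\xi_r}-1|^2$ (or its complement $F_0$), still has exactly the degree-$0$ homogeneous discontinuity at the origin that drives your argument.
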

\begin{proof} In the course of this proof we shall freely use the notation and results from Appendix \ref{appedsec}. We write $d_k$ for short for the coboundary operators $d_k^{\hspace{1pt}\mathbb{Z}^d}$ and observe that $\overline{\mathtt{Im} \hspace{1pt} d_k}=\mathtt{Ker} \hspace{1pt} d_{k+1}$. Moreover, the Laplacians $\Delta_k=d_k^{*} d_k + d_{k-1} d^*_{k-1}$ are invertible and one has 
\begin{equation}\label{projkerform}
	\Pi = \text{Id} - d_2^{*} \Delta^{-1}_2 d_2.
\end{equation}
One verifies (\ref{projkerform}) using the identities 
$$  \Delta_{k+1} d_k= d_k\Delta_k, \quad d^*_k\Delta_{k+1}=\Delta_kd^*_k .$$
Further, Lemma \ref{symtrainop} implies
\begin{equation}
\langle  \delta_p,\Pi \delta_{p+ne} \rangle_{\ell^2}=
\frac{1}{(2\pi)^d}\int_{\mathbb{T}^d} e^{\,in (e \cdot \xi)\cdot \xi}\ M^{\Pi-\text{Id}}_{ij,ij}(\xi)\ d\xi,
\end{equation}
where $M^{\Pi-\text{Id}}$ is the symbol of $\Pi-\text{Id}$ and the indices $i$ and $j$ are such that the basis vectors $e_i$ and $e_j$ span the (oriented) 2-cell $p$. Using Lemma \ref{symofsoop} one finds that $M^{\Pi'}_{ij,ij}(\xi)$ is equal, up to a phase factor, to 
$$F_0(\xi):=\frac{\displaystyle\sum_{r\notin\{i,j\}}\sin^2\!\big(\tfrac{\xi_r}{2}\big)}{\displaystyle \sum_{r=1}^d \sin^2\!\big(\tfrac{\xi_r}{2}\big)}  .$$
We set
$$F_1(\xi)= \sum_{r\notin\{i,j\}}\xi_r^2/\|\xi\|^2, \quad \xi \in \mathbb{R}^d,$$
and note that both $F_0$ and $F_1$ have (essential) discontinuity at 0. 

The Fourier transform $\hat{F}_1(x)$ of $F_1(\xi)$ (regarded as a tempered homogeneous distribution) can be explicitly computed and decays like $\|x\|^{-d}$ as $\|x\| \to \infty$. Let $\chi$ be a smooth compactly supported real valued function on $\mathbb{R}^d$ which is 1 near the origin. It is not hard to check that $\chi F_0-F_1$ is smooth, hence the Fourier transform of $\chi F_0$ obeys the same power law decay from which the desired conclusion easily follows.
\end{proof}
The following theorem implies that the lattice field associated to the measure $\mu_{\beta, \mathbb{Z}^d}$ is massless for all $\beta>0$.
\begin{theorem}\label{masslessth}
The sequence $|O_{\beta}(p,p+ne)|$ tends to 0 not faster than $n^{-d}$ as $n \to \infty$ for all $\beta>0$.
\end{theorem}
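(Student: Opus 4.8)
The plan is to read the theorem off from the closed-form expression of Proposition~\ref{propoofsq} together with the decay estimate of Lemma~\ref{decaylemm}, so that all that remains is an elementary comparison. Write $a_n:=\langle \delta_p,\Pi \delta_{p+ne} \rangle_{\ell^2}$. Proposition~\ref{propoofsq} applied with $q=p+ne$ gives
$$
O_{\beta}(p,p+ne)=e^{-4\pi^2\beta(\|\Pi \delta_p \|^2_{\ell^2}+\|\Pi \delta_{p+ne} \|^2_{\ell^2})}\bigl(e^{-8\pi^2\beta a_n}-1\bigr).
$$
Here the prefactor lies between $e^{-8\pi^2\beta}$ and $1$, since $\|\Pi \delta_{p+ne}\|_{\ell^2}\le\|\delta_{p+ne}\|_{\ell^2}=1$; in fact it is independent of $n$, because $\overline{\mathtt{Im} \hspace{1pt} d_1^{\hspace{1pt}\mathbb{Z}^d}}$ is invariant under lattice translations (the coboundary commuting with translations), so $\Pi$ commutes with the unitary translation operators on $C^{2}_{(2)}(\mathbb{Z}^d;\mathbb R)$ and $\|\Pi \delta_{p+ne}\|_{\ell^2}=\|\Pi \delta_p\|_{\ell^2}$.

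Next I would invoke the two-sided elementary inequality $|t|e^{-|t|}\le|e^{-t}-1|\le|t|e^{|t|}$, valid for every real $t$ (the left inequality being $e^{s}-1\ge s$ for $s\ge 0$, the right one $1-e^{-s}\le s$). Taking $t=8\pi^2\beta a_n$ and using $|a_n|\le\|\Pi \delta_p\|_{\ell^2}\|\Pi \delta_{p+ne}\|_{\ell^2}\le 1$ by Cauchy--Schwarz, one obtains constants $0<c_\beta\le C_\beta<\infty$, independent of $n$, with
$$
c_\beta\,|a_n|\ \le\ |O_{\beta}(p,p+ne)|\ \le\ C_\beta\,|a_n|.
$$

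Finally, Lemma~\ref{decaylemm} states exactly that $|a_n|$ tends to $0$ but not faster than $n^{-d}$ as $n\to\infty$, and the sandwich above transfers both properties verbatim to $|O_{\beta}(p,p+ne)|$, which is the assertion of the theorem. There is no genuine obstacle at this stage: the substantive work has already been done in Proposition~\ref{propoofsq} and, above all, in Lemma~\ref{decaylemm} (identifying the symbol of $\Pi-\text{Id}$ with $F_0$, replacing $\chi F_0$ by the homogeneous distribution $F_1$, and reading off the $\|x\|^{-d}$ decay of $\hat{F}_1$). The only point in the present step needing a moment's care is the uniform two-sided control of the exponential prefactor in $n$, which is handled by the translation-invariance remark above.
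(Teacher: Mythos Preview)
Your proposal is correct and follows essentially the same route as the paper: combine the closed form from Proposition~\ref{propoofsq} with the decay estimate of Lemma~\ref{decaylemm} via an elementary lower bound on $|e^{-t}-1|$. The only cosmetic differences are that the paper invokes $|e^{-t}-1|\ge |t|/(1+|t|)$ in place of your $|t|e^{-|t|}\le |e^{-t}-1|$, and that you additionally supply the upper bound and the translation-invariance observation making the prefactor independent of $n$, both of which the paper leaves implicit.
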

\begin{proof}
The statement follows from Proposition \ref{propoofsq} and Lemma \ref{decaylemm} in view of the elementary inequality
$$|e^{-t}-1|\ \ge\ \frac{|t|}{1+|t|},\quad t\in\mathbb{R}.$$
\end{proof}
	
\subsection{A continuum limit construction}
In this section we continue to assume that $G=U(1)$ and describe a construction of an Euclidean invariant continuum limit measure representing an Abelian gauge theory on a suitable projective limit space.

We denote by $\mathcal{Q}^d$ the set of all pairs $(K,T)$, where $K$ is a $d$-cube in $\R^d$ and $T$ is a triangulation of $K$. We write $(K,T)\preceq (K',T')$ if (1) $K \subseteq K'$, (2) $K$ is a union of $d$-simplices in $T'$, and (3) the restriction of $T'$ to $K$ is a subdivision of $T$. It is straightforward to check that $(\mathcal{Q}^d,\preceq)$ is a partially ordered set. Moreover, one has:
\begin{lemma}
The set $(\mathcal{Q}^d,\preceq)$ is directed.
\end{lemma}
\begin{proof}
Let $(K,T), (K',T') \in \mathcal{Q}^d$, $K''$ be a cube such that $K \cup K ' \subset K''$, and $T^0$ be any triangulation of $K''$. Then by \cite[Addendum 2.12]{RS} we can find a subdivision $T''$ of $T^0$ such that 
$(K,T)\preceq (K'',T'')$ and $(K',T')\preceq (K'',T'')$.
\end{proof}
If $(K,T), (K',T') \in \mathcal{Q}^d$ with $(K,T)\preceq (K',T')$, one has natural restriction/subdivision maps
$$ P^k_{T,T'}: C^k(T';G) \to C^k(T;G), $$
$$ (P^k_{T,T'})_{\bullet}: C^k(T';\mathfrak{g}) \to C^k(T;\mathfrak{g})$$
for $k=1,2$, given by first restricting a $k$-cochain to $T$ and then  multiplying/adding its values over the fine \(k\)-simplices subdividing each coarse \(k\)-simplex of \(T\), with signs determined by orientation. 
One has
$$P^2_{T,T'} d_{T',1}^{\hspace{1pt}G}=   d_{T,1}^{\hspace{1pt}G}P^1_{T,T'},$$
$$(P^2_{T,T'})_{\bullet}d_{T',1}^{\hspace{1pt}\mathfrak g}=d_{T,1}^{\hspace{1pt}\mathfrak g}(P^1_{T,T'})_{\bullet}, $$
where $d_{T,1}^{\hspace{1pt}G}$ and $d_{T,1}^{\hspace{1pt}\mathfrak g}$ are the corresponding coboundary operators acting on 1-cochains, hence the maps $ P^k_{K,K'}$ and $(P^k_{K,K'})_{\bullet}$
preserve the images of $d_{T,1}^{\hspace{1pt}G}$ and $d_{T,1}^{\hspace{1pt}\mathfrak g}$, respectively. 

Given $(K,T)\in \mathcal{Q}^d$, we introduce an area-normalized inner product on $C^2(T;\mathfrak{g})$ by setting
\begin{equation}\label{areanorminn}
\langle c_1, c_2 \rangle_T=\sum_{p \in T^{(2)}} \frac{1}{A(p)}c_1(p)c_2(p),\quad c_1,c_2 \in C^2(T;\mathfrak{g}),
\end{equation}
where $T^{(2)}$ is the set of all (unoriented) 2-simplices in $T$, $A(p)$ denotes the area of the 2-simplex $p$, we have identified $\mathfrak{g}$ with $\R$ and products of  values of the oriented cochains $c_1$ and $c_2$ are computed using any choice of orientations of the 2-simplices. (Note that the expression (\ref{areanorminn}) is independent of this choice of orientations.) As above, we write $J_T$ for the canonical inclusion of  $\widehat{\mathtt{Im}\hspace{1pt}d_{T,1}^{\hspace{1pt}G}}$ into $(\mathtt{Im}\hspace{1pt}d_{T,1}^{\hspace{1pt}\mathfrak g})^*$.
\begin{lemma}
The collection
\begin{equation}\label{prpjistll}
	\left\{(\mathtt{Im}\hspace{1pt}d_{T,1}^{\hspace{1pt}G},\mathtt{Im}\hspace{1pt}d_{T,1}^{\hspace{1pt}\mathfrak g},J_T),( P^2_{T,T'},(P^2_{T,T'})_{\bullet})\right\}_{(K,T) \in \mathcal{Q}^d},
\end{equation} 
where $\mathtt{Im}\hspace{1pt}d_{T,1}^{\hspace{1pt}\mathfrak g}$ is endowed with the inner product (\ref{areanorminn}),    is a projective system in $\mathbf{LG}$.
\end{lemma}
\begin{proof}
The identity $(P^2_{T,T'})^*_{\bullet}\circ J_{T'} =J_{T}\circ (P^2_{T,T'})^*$  follows from the naturality of the inclusion of the dual of a compact connected Abelian Lie group into the dual of its Lie algebra with respect to Lie group homomorphisms. It remains to show that $(P^2_{T,T'})_{\bullet}$ is  contraction. To this end, let $(K,T), (K',T') \in \mathcal{Q}^d$ with $(K,T)\preceq (K',T')$ and let $p$ be a 2-simplex in $T$. Then, choosing orientations of $p$ and the 2-simplices in $T'$ contained in $p$, we can write
$$
((P^2_{T,T'})_{\bullet}c)(p)=\sum_{q\subset p}\varepsilon(q,p)c(q), \quad c \in C^2(T';\mathfrak{g}),
$$
where the sum is over the 2-simplices in $T'$ contained in $p$ and the signs $\varepsilon(q,p)=\pm 1$ are determined by the choice of orientations. The Cauchy-Schwartz inequality implies
$$
\left|\sum_{q\subset p}\varepsilon(q,p)c(p)\right|^2
\leq
\left(\sum_{q\subset p}A(q)\right)
\left(\sum_{q\subset p}\frac{|c(q)|^2}{A(q)}\right),
$$
hence, noticing that $\sum_{q\subset p}A(q)=A(p)$, one has 
\begin{equation}\label{fracserin}
\frac{|(P^2_{T,T'})_{\bullet}c)(p)|^2}{A(p)}
\leq
\sum_{q\subset p}\frac{|c(q)|^2}{A(q)}.
\end{equation}
Finally, summing (\ref{fracserin}) over all 2-simplices in in $T$, we obtain  
$$\|((P^2_{T,T'})_{\bullet}c)\|^2_{T} \leq \|c\|^2_{T'}.$$
\end{proof}

 We denote the limit of (\ref{prpjistll}) in $\mathbf{LG}$ by $(\mathtt{Im}\hspace{1pt}d_{\infty,1}^{\hspace{1pt}G},
\mathtt{Im}\hspace{1pt}d_{\infty,1}^{\hspace{1pt}\mathfrak g},J_{\infty})$ and the heat kernel measure associated to this object in $\mathbf{LG}$ by $\mu^{H}_{\beta,\infty}$.

Next we observe that the maps $ P^1_{T,T'}$ descend to maps
$$  \bar{P}^1_{T,T'}: C^1(T';G)/C^0(T';G)  \to  C^1(T;G)/C^0(T;G) $$
and write $(C^1/C^0)_{\infty}$ for the limit of the 
projective system 
$$\left\{(C^1(T;G)/C^0(T;G),\bar{P}^1_{T,T'})\right\}_{T \in \mathcal{Q}^d}$$ in $\mathbf{TopG}_c$. Finally, we note that the maps
$$(\bar{d}^{\hspace{1pt}G}_{T,1})^{-1}: \mathtt{Im}\hspace{1pt}d_{T,1}^{\hspace{1pt}G}
 \to C^1(T;G)/C^0(T;G) $$
defined in Section \ref{altinflatclim} induce a continuous map
$$(\bar{d}^{\hspace{1pt}G}_{\infty,1})^{-1}: \mathtt{Im}\hspace{1pt}d_{\infty,1}^{\hspace{1pt}G} \to (C^1/C^0)_{\infty}$$
and set
$$\mu_{\beta, \infty}=: (\bar{d}^{\hspace{1pt}G}_{\infty,1})^{-1}(\mu^{H}_{\beta,\infty}).$$
The natural action of the $d$-dimensional Euclidean group $E(d)$ on $\mathcal{Q}^d$ preserves the partial order and turns
$(\mathtt{Im}\hspace{1pt}d_{T,1}^{\hspace{1pt}G},
P^2_{T,T'})$,
 $(\mathtt{Im}\hspace{1pt}d_{T,1}^{\hspace{1pt}\mathfrak g},
(P^2_{T,T'})_{\bullet})$ and $(C^1(T;G)/C^0(T;G),\bar{P}^1_{T,T'})$
into $E(d)$-equivariant projective systems (cf. Definition \ref{euqiprojde}). Thus we obtain (via Lemma \ref{lemproact} and Proposition \ref{proponisomgievi}) $E(d)$-actions on the projective limits
$\mathtt{Im}\hspace{1pt}d_{\infty,1}^{\hspace{1pt}G}$,
$\mathtt{Im}\hspace{1pt}d_{\infty,1}^{\hspace{1pt}\mathfrak g}$ and $(C^1/C^0)_{\infty}$.
\begin{proposition}
The measure $\mu_{\beta, \infty}$ is $E(d)$-invariant.
\end{proposition}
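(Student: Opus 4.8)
The plan is to mirror, at the infinite-volume continuum level, the argument already used to prove that $\mu_{\beta,\mathbb{Z}^d}$ is lattice-translation invariant: reduce the $E(d)$-invariance of $\mu_{\beta,\infty}$ to that of the heat kernel measure $\mu^{H}_{\beta,\infty}$ on $\mathtt{Im}\hspace{1pt}d_{\infty,1}^{\hspace{1pt}G}$, and reduce the latter to the invariance of its Fourier transform. First I would check that, for each $g\in E(d)$ and each $L\in\mathcal{Q}^d$, the Euclidean motion $g$ sends $L$ to a lattice $g\cdot L\in\mathcal{Q}^d$ with the same mesh $h_{g\cdot L}=h_L$ and carries oriented $2$-cells to oriented $2$-cells; consequently the induced linear isomorphism $\mathtt{Im}\hspace{1pt}d_{L,1}^{\hspace{1pt}\mathfrak g}\to\mathtt{Im}\hspace{1pt}d_{g\cdot L,1}^{\hspace{1pt}\mathfrak g}$ preserves the Euclidean inner product and hence also the rescaled inner product $h_L^{d-4}\langle\cdot,\cdot\rangle_{\mathrm{Eucl}}$, so it is an isometry. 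Together with the induced topological group isomorphism $\mathtt{Im}\hspace{1pt}d_{L,1}^{\hspace{1pt}G}\to\mathtt{Im}\hspace{1pt}d_{g\cdot L,1}^{\hspace{1pt}G}$ and the naturality of the maps $J_L$ (inclusions of character groups into Lie algebra duals), these data assemble into isomorphisms in $\mathbf{LG}$ and turn the projective system (\ref{prpjistll}) into an $E(d)$-equivariant projective system in $\mathbf{LG}$ (cf. Definition \ref{euqiprojde}) whose Hilbert space component consists of isometries.

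Granting this, Lemma \ref{lemproact} and Proposition \ref{proponisomgievi} equip the limit object $(\mathtt{Im}\hspace{1pt}d_{\infty,1}^{\hspace{1pt}G},\mathtt{Im}\hspace{1pt}d_{\infty,1}^{\hspace{1pt}\mathfrak g},J_\infty)$ with an $E(d)$-action in $\mathbf{LG}$: a unitary action $\Phi_g$ on $\mathtt{Im}\hspace{1pt}d_{\infty,1}^{\hspace{1pt}\mathfrak g}$, a topological group action $\Psi_g$ on $\mathtt{Im}\hspace{1pt}d_{\infty,1}^{\hspace{1pt}G}$, intertwined by $J_\infty$ through the relation $\Phi_g^{*}\circ J_\infty=J_\infty\circ\Psi_g^{*}$ obtained by passing (\ref{morcompat}) to the limit. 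Since $\mu^{H}_{\beta,\infty}$ is by definition the unique Borel probability measure on $\mathtt{Im}\hspace{1pt}d_{\infty,1}^{\hspace{1pt}G}$ whose Fourier transform is $S_\beta\colon\chi\mapsto\exp(-4\pi^2\beta\|J_\infty(\chi)\|_{*}^{2})$, Lemma \ref{pushfofour} shows that $\Psi_g(\mu^{H}_{\beta,\infty})$ has Fourier transform $S_\beta\circ\Psi_g^{*}$, and
\[
\exp\big(-4\pi^2\beta\|J_\infty(\Psi_g^{*}\chi)\|_{*}^{2}\big)=\exp\big(-4\pi^2\beta\|\Phi_g^{*}J_\infty(\chi)\|_{*}^{2}\big)=\exp\big(-4\pi^2\beta\|J_\infty(\chi)\|_{*}^{2}\big),
\]
using the intertwining relation and the fact that $\Phi_g^{*}$ is an isometry. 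Hence $\Psi_g(\mu^{H}_{\beta,\infty})=\mu^{H}_{\beta,\infty}$, i.e. $\mu^{H}_{\beta,\infty}$ is $E(d)$-invariant. Finally, since a Euclidean motion carries cells to cells compatibly with the coboundary and with gauge equivalence, the finite-level maps $(\bar{d}^{\hspace{1pt}G}_{L,1})^{-1}$ are $E(d)$-equivariant, so the induced map $(\bar{d}^{\hspace{1pt}G}_{\infty,1})^{-1}$ intertwines the $E(d)$-actions and $\mu_{\beta,\infty}=(\bar{d}^{\hspace{1pt}G}_{\infty,1})^{-1}\mu^{H}_{\beta,\infty}$ is $E(d)$-invariant as the pushforward of an invariant measure under an equivariant map.

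The only genuinely substantive point is the one isolated in the first paragraph: verifying that the $E(d)$-action on $\mathcal{Q}^d$ really does lift to an $E(d)$-equivariant structure on (\ref{prpjistll}) in $\mathbf{LG}$ with isometric Hilbert space maps — concretely, that the Euclidean action preserves the rescaled inner products (which hinges on $E(d)$ acting without changing the mesh) and that the compatibility (\ref{morcompat}) survives in the limit. Once this is in place, everything else is the same Fourier-transform bookkeeping used for $\mu_{\beta,\mathbb{Z}^d}$.
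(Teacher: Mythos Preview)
Your proposal is correct and follows essentially the same approach as the paper's proof, which is a terse two-sentence argument relying on the $E(d)$-equivariant projective system structure already set up in the paragraph preceding the proposition. You have simply unpacked what the paper means by ``by construction'' --- namely, that $E(d)$ preserves the mesh and hence the rescaled inner products, yielding isometric Hilbert maps and an invariant Fourier transform --- and made the equivariance of $(\bar d^{\hspace{1pt}G}_{\infty,1})^{-1}$ explicit where the paper just invokes commutativity of the coboundary with the equivariant structure.
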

\begin{proof}
By construction, the Fourier transform of the measure $\mu^{H}_{\beta,\infty}$, and hence $\mu^{H}_{\beta,\infty}$  is $E(d)$-invariant. Since the coboundary operators commute with the maps defining the $E(d)$-equivariant projective system structure (cf. Definition \ref{euqiprojde}) on cochains, the same holds for $\mu_{\beta, \infty}$.
\end{proof}

\appendix
\section{Discrete exterior calculus on $\mathbb{Z}^d$ and Fourier multipliers}\label{appedsec}
In this Appendix we utilize the notation introduced in Section \ref{limmeassect} regarding the cubical lattice $\mathbb{Z}^d\subset\mathbb{R}^d$. Denote by $\{e_i\}$ the standard oriented orthonormal basis in $\mathbb{R}^d$ and by $\{e^{k}_\alpha\}$, where $\alpha=(\alpha_1,\ldots,\alpha_k)$ with $\alpha_1<\dots<\alpha_k$, the induced orthonormal basis in the $k$-th exterior power $\Lambda^k(\mathbb{R}^d)$ equipped with the induced Euclidean product. We write $\ell^2(\mathbb{Z}^d;\Lambda^k(\mathbb{R}^d))$ for the Hilbert space of $\Lambda^k(\mathbb{R}^d)$-valued square summable 0-cochains on $\mathbb{Z}^d$.

Given an oriented $k$-cell $p$ in $\mathbb{Z}^d$, we write $v_p$ for the vertex in $p$ closest to the origin and $\alpha_p$ for the multi-index corresponding to the subset of $\{e_i\}$ consisting of the vectors spanning $p$ translated to the origin. We denote by $U(p)$ the element of $\ell^2(\mathbb{Z}^d;\Lambda^k(\mathbb{R}^d))$ defined by assigning
the vector $\{e^{k}_{\alpha_p}\}$ to $v_p$ and 0 to all other vertices in $\mathbb{Z}^d$. The verification of the following lemma is straightforward.

\begin{lemma}\label{ideexterandk}
The assignment $p \mapsto U(p)$ extends to an isometric isomorphism between
$C^k_{(2)} (\mathbb{Z}^d;\mathbb{R})$
and $\ell^2(\mathbb{Z}^d;\Lambda^k(\mathbb{R}^d))$ which interwines the natural actions of $\mathbb{Z}^d$ by lattice translations.
\end{lemma}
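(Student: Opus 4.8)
The plan is to realize the map as the linear extension of a bijection between orthonormal bases and then to check equivariance on those bases, so that both the isometry and the intertwining properties follow by linearity and continuity.

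First I would record the combinatorial description of the cells. Every oriented $k$-cell $p$ of $\mathbb{Z}^d$ is uniquely determined by the pair $(v_p,\alpha_p)$, where $v_p\in\mathbb{Z}^d$ is the componentwise minimal vertex of $p$ (the vertex closest to the origin) and $\alpha_p$ is the increasing $k$-index of coordinate directions spanning $p$; conversely, once the fixed orientation convention on the cells of $\mathbb{Z}^d$ is in place, every pair $(v,\alpha)$ with $\alpha_1<\dots<\alpha_k$ comes from exactly one such cell. Thus $p\mapsto(v_p,\alpha_p)$ identifies the set of $k$-cells with $\mathbb{Z}^d\times\{\alpha:\alpha_1<\dots<\alpha_k\}$.

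Next I would observe that the delta cochains $\delta_p$ (the cochain equal to $1$ on $p$ and $0$ on every other cell) form an orthonormal basis of $C^k_{(2)}(\mathbb{Z}^d;\mathbb{R})$ by the very definition of the $\ell^2$ inner product. On the other side, writing $f(v)=\sum_\alpha f_\alpha(v)\,e^k_\alpha$ identifies $\ell^2(\mathbb{Z}^d;\Lambda^k(\mathbb{R}^d))$ with the Hilbert-space direct sum over $\alpha$ of copies of $\ell^2(\mathbb{Z}^d;\mathbb{R})$, under which $U(p)$ becomes the elementary function $v\mapsto\delta_{v,v_p}\,e^k_{\alpha_p}$; since $\{e^k_\alpha\}$ is an orthonormal basis of $\Lambda^k(\mathbb{R}^d)$ and the point masses at the vertices form an orthonormal basis of $\ell^2(\mathbb{Z}^d;\mathbb{R})$, the family $\{U(p)\}_p$ is an orthonormal basis of $\ell^2(\mathbb{Z}^d;\Lambda^k(\mathbb{R}^d))$. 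A bijection between orthonormal bases of two real Hilbert spaces extends uniquely to an isometric isomorphism, so $\delta_p\mapsto U(p)$ does, and this is the asserted extension of $p\mapsto U(p)$.

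Finally, for equivariance I would note that a lattice translation by $t\in\mathbb{Z}^d$ sends the cell with data $(v,\alpha)$ to the cell with data $(v+t,\alpha)$, hence induces on cochains the permutation $\delta_p\mapsto\delta_{t\cdot p}$ of the basis; on the other side the translation action on $\ell^2(\mathbb{Z}^d;\Lambda^k(\mathbb{R}^d))$ shifts the argument, sending $v\mapsto\delta_{v,v_p}\,e^k_{\alpha_p}$ to $v\mapsto\delta_{v,v_p+t}\,e^k_{\alpha_p}$, i.e.\ $U(p)\mapsto U(t\cdot p)$. Thus the isomorphism intertwines the two actions on basis vectors, and therefore everywhere by linearity and continuity. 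The only point that genuinely needs attention — and the reason the statement is merely ``straightforward'' rather than immediate — is keeping the orientation conventions aligned, so that the sign attached to $p$ by $\delta_p$ agrees with the sign built into $U(p)$ through $e^k_{\alpha_p}$; once the same convention is used on both sides, there is nothing further to verify.
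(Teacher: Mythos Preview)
Your argument is correct and is precisely the kind of routine verification the paper has in mind: the paper's own ``proof'' consists only of the sentence that the verification is straightforward, and your identification of the $\delta_p$ with the $U(p)$ as a bijection of orthonormal bases, together with the observation that a lattice translation shifts $v_p$ without altering $\alpha_p$, is exactly that verification spelled out.
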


For the following, recall that the (discrete) Fourier transform for vector valued functions
$$ \mathcal{F}: \ell^2(\mathbb{Z}^d;\mathbb{C}^r)\to L^2(\mathbb{T}^d;\mathbb{C}^r)$$ is an isometric isomorphism. It is a classical result (see e.g. \cite[I.8.5]{Kat}) that any bounded translation invariant operator $A: \ell^2(\mathbb{Z}^d;\mathbb{C}^{r_1}) \to \ell^2(\mathbb{Z}^d;\mathbb{C}^{r_2})$ is a (matrix-valued) Fourier multiplier, i.e. there exists a matrix-valued function $M^A\in L^\infty(\mathbb{T}^d;\mathrm{Hom}(\mathbb{C}^{r_1},\mathbb{C}^{r_2}))$, called the {\em symbol} of $A$, such that $A=\mathcal{F}^{-1} M^A\,\mathcal{F}$ (here $M^A$ stands also for the multiplication by $M^A$).
\begin{lemma}\label{symtrainop}
Let $A$ be a bounded $\mathbb{Z}^d$-invariant operator on $C^k_{(2)} (\mathbb{Z}^d;\mathbb{C})$ and let $\tilde{A}$ be the induced operator on  $\ell^2(\mathbb{Z}^d;\Lambda^k(\mathbb{C}^d))$ given by Lemma \ref{ideexterandk}. Let $p$ and $q$ be two $k$-cells in $\mathbb{Z}^d$, and $\delta_p$ and $\delta_q$ be the corresponding Dirac cochains in $C^k_{(2)} (\mathbb{Z}^d;\mathbb{C})$. Then
$$\langle \delta_{p}, A\,\delta_{q}\rangle\ = \frac{1}{(2\pi)^d}\int_{\mathbb{T}^d} e^{\,i (v_q-v_p)\cdot \xi}\ M^{\tilde A}_{\alpha_p,\alpha_q}(\xi)\ d\xi,$$ 
where $$M^{\tilde A}_{\alpha_p,\alpha_q}(\xi)=\langle e^{k}_{\alpha_p}, M^{\tilde A}(\xi)e^{k}_{\alpha_q}\rangle_{\Lambda^k(\mathbb{C}^d)}.$$
\end{lemma}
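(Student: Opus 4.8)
The plan is to reduce everything to the scalar statement that the discrete Fourier transform diagonalizes bounded translation-invariant operators, as recalled just before the lemma, and then carefully track how the identification of Lemma \ref{ideexterandk} converts matrix entries. First I would transport $A$ to the operator $\tilde A$ on $\ell^2(\mathbb{Z}^d;\Lambda^k(\mathbb{C}^d))$ via the isometric isomorphism $p\mapsto U(p)$; since that isomorphism intertwines the lattice translation actions and $A$ is $\mathbb{Z}^d$-invariant, $\tilde A$ is again a bounded translation-invariant operator, hence a matrix-valued Fourier multiplier with symbol $M^{\tilde A}\in L^\infty(\mathbb{T}^d;\mathrm{Hom}(\Lambda^k(\mathbb{C}^d),\Lambda^k(\mathbb{C}^d)))$, so $\tilde A=\mathcal{F}^{-1}M^{\tilde A}\mathcal{F}$.

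Next I would unwind the two sides. Because the isomorphism is isometric, $\langle\delta_p,A\,\delta_q\rangle=\langle U(\delta_p),\tilde A\,U(\delta_q)\rangle$. By definition $U(\delta_q)$ is the element of $\ell^2(\mathbb{Z}^d;\Lambda^k(\mathbb{C}^d))$ supported at the single vertex $v_q$ with value the basis vector $e^k_{\alpha_q}$; that is, $U(\delta_q)=\delta_{v_q}\otimes e^k_{\alpha_q}$ where $\delta_{v_q}$ is the scalar Dirac cochain at the vertex $v_q$. Applying $\mathcal{F}$ gives $\widehat{U(\delta_q)}(\xi)=e^{-iv_q\cdot\xi}e^k_{\alpha_q}$ (with the sign convention making $\mathcal F$ an isometry onto $L^2(\mathbb{T}^d)$ with the $(2\pi)^{-d}$-normalized measure), and similarly for $p$. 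Then I would use the Plancherel identity
\[
\langle U(\delta_p),\tilde A\,U(\delta_q)\rangle
=\frac{1}{(2\pi)^d}\int_{\mathbb{T}^d}\big\langle\widehat{U(\delta_p)}(\xi),\,M^{\tilde A}(\xi)\,\widehat{U(\delta_q)}(\xi)\big\rangle_{\Lambda^k(\mathbb{C}^d)}\,d\xi,
\]
and substitute the two Fourier transforms. The scalar phases combine to $e^{i(v_q-v_p)\cdot\xi}$, and what remains inside the integral is exactly $\langle e^k_{\alpha_p},M^{\tilde A}(\xi)\,e^k_{\alpha_q}\rangle_{\Lambda^k(\mathbb{C}^d)}=M^{\tilde A}_{\alpha_p,\alpha_q}(\xi)$, which is the claimed formula.

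I do not expect a genuine obstacle here; the lemma is essentially a bookkeeping statement. The one point deserving care is consistency of Fourier-transform conventions: the normalization of the measure on $\mathbb{T}^d$, and the sign in the exponent, must be chosen so that $\mathcal F$ is the isometry referenced in the text and so that the phase in the final formula comes out as $e^{i(v_q-v_p)\cdot\xi}$ rather than its conjugate; this forces $\widehat{\delta_v}(\xi)=e^{-iv\cdot\xi}$ and the $\langle e^k_{\alpha_p},M^{\tilde A}e^k_{\alpha_q}\rangle$ ordering (antilinear in the first slot) exactly as written. A secondary routine check is that the multi-index $\alpha_p$ attached to a cell $p$ really is the index of the basis vector $U(p)$ assigns to $v_p$, so that pairing against $e^k_{\alpha_p}$ picks out the correct matrix entry of the symbol; this is immediate from the definitions of $v_p$, $\alpha_p$ and $U(p)$ given in the Appendix. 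Hence the proof is just: transport to $\tilde A$, invoke the Fourier-multiplier representation, apply Plancherel, and read off the integrand.
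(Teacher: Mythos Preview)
Your proposal is correct and follows essentially the same route as the paper: transport $A$ to $\tilde A$ via the isometry $U$, invoke the Fourier-multiplier representation $\tilde A=\mathcal{F}^{-1}M^{\tilde A}\mathcal{F}$, apply Plancherel, and substitute $\mathcal{F}U(\delta_p)(\xi)=e^{-iv_p\cdot\xi}e^k_{\alpha_p}$ to read off the integrand. The paper's proof is exactly this four-line computation, and your remarks on the sign/normalization conventions and the identification of $\alpha_p$ are the only points requiring care.
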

\begin{proof}
One computes as follows:
\begin{align*}
	\langle \delta_{p}, A\,\delta_{q}\rangle
	&=\left\langle U\delta_{p},\, \tilde{A}\,U\delta_{q}\right\rangle_{\ell^2(\mathbb{Z}^d;\Lambda^k(\mathbb{C}^d))} \nonumber\\
	&=\left\langle \mathcal{F}U\delta_{p},\, M^{\tilde{A}}(\xi)\,\mathcal{F}U\delta_{q}\right\rangle_{L^2(\mathbb{T}^d;\Lambda^k(\mathbb{C}^d))} \nonumber\\
	&=\frac{1}{(2\pi)^d}\int_{\mathbb{T}^d} \left\langle e^{-i v_p\cdot \xi} e^{k}_{\alpha_p},\ M^{\tilde{A}}(\xi)\, e^{-i v_q\cdot \xi} e^{k}_{\alpha_q}\right\rangle_{\Lambda^k(\mathbb{C}^d)}\, d\xi \nonumber\\
	&=\frac{1}{(2\pi)^d}\int_{\mathbb{T}^d} e^{\,i (v_q-v_p)\cdot \xi}\ \left\langle e^{k}_{\alpha_p},\ M^{\tilde{A}}(\xi)\, e^{k}_{\alpha_q}\right\rangle_{\Lambda^k(\mathbb{C}^d)}\, d\xi. 
\end{align*}
\end{proof}
Let us denote by $\tilde{d_k}$, $\tilde{d^*_k}$ and $\tilde{\Delta}_k$ the operators on 0-cochains induced via Lemma \ref{ideexterandk} by the coboundary operator $d_k$ on complex-valued $k$-cochains, its adjoint $d^*_k$ and the Laplacian 
$\Delta_k=d_k^{*} d_k + d_{k-1} d^*_{k-1}$, respectively. We set
$$ m(\xi) = \sum_{j=1}^d (e^{i\xi_j} - 1) e_j, \quad \xi=(\xi_1,\ldots, \xi_d ) \in \mathbb{T}^d.$$
\begin{lemma}\label{symofsoop}
The symbols of $\tilde{d_k}$, $\tilde{d^*_k}$ and $\tilde{\Delta}_k$ are given respectively by exterior multiplication by $m(\xi)$, interior multiplication by the conjugate $\overline{m(\xi)}$ and $\left( \sum_{j=1}^d 4\sin^2(\xi_j/2) \right) \cdot \text{Id}$.
\end{lemma}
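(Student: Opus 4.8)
The plan is to compute the symbol of $\tilde{d_k}$ directly from the definition of the cellular coboundary operator and then obtain the symbols of $\tilde{d^*_k}$ and $\tilde{\Delta}_k$ by purely algebraic manipulations of Fourier multipliers. First I would unwind the definition of $d_k$ via incidence numbers. A $(k+1)$-cell of $\mathbb{Z}^d$ is determined by its base vertex $v$ and an increasing multi-index $\beta=(\beta_1<\cdots<\beta_{k+1})$; its codimension-one faces occur in pairs --- a ``lower'' face based at $v$ and an ``upper'' face based at $v+e_{\beta_r}$, both carrying the multi-index $\beta\setminus\{\beta_r\}$ and incidence numbers of the form $\pm(-1)^{r-1}$. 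Feeding this through the isometry $U$ of Lemma \ref{ideexterandk} and using the elementary identity $e_{\beta_r}\wedge e^{k}_{\beta\setminus\{\beta_r\}}=(-1)^{r-1}e^{k+1}_{\beta}$, one finds that $\tilde{d_k}$ equals $\sum_{j=1}^{d} e_j\wedge(\tau_j-\text{Id})$, where $(\tau_j f)(v)=f(v+e_j)$ and $e_j\wedge\,\cdot\,$ denotes left exterior multiplication on $\Lambda^k(\mathbb{C}^d)$.

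Next I would apply the vector-valued Fourier transform $\mathcal{F}$. A direct computation (as in the proof of Lemma \ref{symtrainop}) shows that $\mathcal{F}$ conjugates the shift $\tau_j$ into multiplication by $e^{i\xi_j}$, while exterior multiplication by the constant vector $e_j$ commutes with $\mathcal{F}$. Hence $\tilde{d_k}$ is a bounded $\mathbb{Z}^d$-invariant operator whose symbol is $\sum_{j=1}^{d}(e^{i\xi_j}-1)(e_j\wedge\,\cdot\,)$, i.e. exterior multiplication by $m(\xi)$, which is the first assertion. For $\tilde{d^*_k}$ I would use that the symbol of an adjoint operator is the pointwise Hermitian adjoint of the symbol, together with the standard fact that, for the Hermitian inner product on $\Lambda^k(\mathbb{C}^d)$, the adjoint of exterior multiplication by a vector $w=\sum w_j e_j$ is the contraction $\iota_{\overline{w}}=\sum\overline{w_j}\,\iota_{e_j}$; applying this with $w=m(\xi)$ shows that the symbol of $\tilde{d^*_k}$ is interior multiplication by $\overline{m(\xi)}$.

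Finally, for $\tilde{\Delta}_k=\tilde{d^*_k}\tilde{d_k}+\tilde{d}_{k-1}\tilde{d}^*_{k-1}$ I would multiply the corresponding symbols and invoke the canonical anticommutation relation $\iota_{\overline{w}}(w\wedge\omega)+w\wedge(\iota_{\overline{w}}\omega)=\big(\sum_{j}|w_j|^2\big)\omega$, which itself follows from $\{\iota_{e_l},\,e_j\wedge\,\cdot\,\}=\delta_{jl}\,\text{Id}$. With $w=m(\xi)$ this produces the scalar symbol $\big(\sum_{j=1}^{d}|e^{i\xi_j}-1|^2\big)\text{Id}$, and since $|e^{i\xi_j}-1|^2=2-2\cos\xi_j=4\sin^2(\xi_j/2)$ this is exactly $\big(\sum_{j=1}^{d}4\sin^2(\xi_j/2)\big)\text{Id}$, which is the last assertion.

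The only delicate points are bookkeeping the signs of the cubical incidence numbers so that $\tilde{d_k}$ comes out in the stated form (an overall sign or orientation convention is harmless, since it would not affect $\tilde{\Delta}_k$) and tracking the complex conjugations when passing between exterior and interior multiplication over $\mathbb{C}$; neither of these is a genuine obstacle. One could alternatively compute the symbols of $\tilde{d^*_k}$ and $\tilde{\Delta}_k$ from scratch by the same shift-to-multiplier argument, but routing through the anticommutation relation is shorter and makes the $4\sin^2$ factor transparent.
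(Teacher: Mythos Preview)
Your proposal is correct and follows essentially the same approach as the paper: the paper also writes $\tilde{d_k}c=\sum_j e_j\wedge(\partial_j c)$ with $\partial_j=\tau_j-\mathrm{Id}$, reads off the symbol $m(\xi)\wedge\,\cdot\,$, obtains the adjoint symbol as contraction by $\overline{m(\xi)}$, and then invokes the same exterior-algebra identity $\iota_{\overline w}(w\wedge\cdot)+w\wedge\iota_{\overline w}=\|w\|^2\,\mathrm{Id}$ together with $\|m(\xi)\|^2=4\sum_j\sin^2(\xi_j/2)$ for the Laplacian. Your write-up is in fact somewhat more explicit than the paper's about the incidence-number bookkeeping and the anticommutation step.
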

\begin{proof}
We note that the forward difference operator $\partial_j$ on 0-cochains defined via $\partial_j c(v) = c(v+e_j) - c(v)$ has symbol $e^{i\theta_j} - 1$. One finds that
$$ \tilde{d_k}c = \sum_{j=1}^d e_j \wedge (\partial_j c),
 \quad c \in  \ell^2(\mathbb{Z}^d;\Lambda^k(\mathbb{R}^d)),$$
 where the wedge product is applied vertex-wise. Taking this into account, we obtain
$$ \mathcal{F}(\tilde{d}c)(\xi) = \sum_{j=1}^d e_j \wedge \mathcal{F}(\partial_j c)(\xi) = \sum_{j=1}^d e_j \wedge  (e^{i\xi_j} - 1) \mathcal{F}(c)(\xi),$$
hence the symbol of $\tilde{d_k}$ is exterior multiplication by $m(\xi)$ and the symbol of $\tilde{d^*_k}$ is interior multiplication by  $\overline{m(\xi)}$. Further, we compute
\begin{multline*}
\|m(\xi)\|^2  = \sum_{j=1}^d (e^{-i\xi_j} - 1)(e^{i\xi_j} - 1) 
= \sum_{j=1}^d (2 - 2\cos\xi_j) = 4\sum_{j=1}^d \sin^2(\xi_j/2), 
\end{multline*}
and conclude, using a standard exterior algebra identity, that the symbol of $\tilde{\Delta}_k$ is $\left( 4\sum_{j=1}^d\sin^2(\xi_j/2) \right) \cdot \text{Id}$.
\end{proof}

\end{document}